\documentclass[11pt]{article}

\usepackage{mathpazo}
\usepackage{amsfonts}
\usepackage{amsmath}
\usepackage{amssymb}
\usepackage{latexsym}
\usepackage{mathtools}
\usepackage{listings}

\usepackage[margin=1in]{geometry}
\usepackage{hyperref}
\hypersetup{pdfpagemode=UseNone}
\usepackage{amsthm}
\newtheorem{theorem}{Theorem}

\newtheorem{conjecture}{Conjecture}

\newtheorem{prop}[theorem]{Proposition}
\newtheorem{cor}[theorem]{Corollary}
\theoremstyle{definition}
\newtheorem{definition}[theorem]{Definition}
\newtheorem{remark}[theorem]{Remark}
\newtheorem{example}[theorem]{Example}

\usepackage{authblk}
\usepackage{tikz}
\usetikzlibrary{calc,positioning}
\usepackage{mdframed}

\mdfdefinestyle{figstyle}{
  linecolor=black!7,
  backgroundcolor=black!7,
  innertopmargin=10pt,
  innerleftmargin=25pt,
  innerrightmargin=25pt,
  innerbottommargin=10pt
}

\definecolor{White}{rgb}{1,1,1}
\definecolor{Black}{rgb}{0,0,0}
\definecolor{LightGray}{rgb}{.81,.81,.81}
\colorlet{ChannelColor}{LightGray}
\colorlet{ChannelTextColor}{Black}
\colorlet{ReadoutColor}{White}

\newcommand{\microspace}{\mspace{0.5mu}}

\newcommand{\tr}{\operatorname{Tr}}
\newcommand{\pt}{\operatorname{T}}
\newcommand{\ppt}{\operatorname{PPT}}
\newcommand{\locc}{\operatorname{LOCC}}
\newcommand{\sep}{\operatorname{SEP}}

\renewcommand{\int}{\operatorname{int}}

\newcommand{\opt}{\operatorname{opt}}

\renewcommand{\t}{{\scriptscriptstyle\mathsf{T}}}

\newcommand{\ip}[2]{\langle #1 , #2\rangle}

\newcommand{\defeq}{\triangleq}

\newcommand{\ket}[1]{
  \lvert\microspace #1 \microspace \rangle}

\newcommand{\bra}[1]{
  \langle\microspace #1 \microspace \rvert}

\newcommand{\I}{\mathbb{1}}

\newcommand{\setft}[1]{\mathrm{#1}}
\newcommand{\Density}{\setft{D}}
\newcommand{\Pos}{\setft{Pos}}
\newcommand{\DO}{\setft{DO}}
\newcommand{\Unitary}{\setft{U}}
\newcommand{\Herm}{\setft{Herm}}
\newcommand{\Lin}{\setft{L}}

\newcommand{\complex}{\mathbb{C}}
\newcommand{\real}{\mathbb{R}}

\newcommand{\1}{\mathbf{1}}
\newcommand{\X}{\mathcal{X}}
\newcommand{\Y}{\mathcal{Y}}

\newcommand{\E}{\mathcal{E}}

\renewcommand{\O}{\textup{O}}

\newcommand{\ketbra}[2]{| #1 \rangle\langle #2 |}

\begin{document}

\title{\LARGE\bf Distinguishability of locally diagonal orthogonally invariant quantum states}

\author{Nathaniel Johnston\thanks{Department of Mathematics \& Computer Science,
Mount Allison University. \tt{njohnston@mta.ca}}
\hspace{1mm} and \hspace{1mm}
Vincent Russo\thanks{Unitary Foundation. \tt{vincentrusso1@gmail.com}}}

\renewcommand\Affilfont{\normalsize\itshape}
\renewcommand\Authfont{\large}
\setlength{\affilsep}{6mm}

\date{April 14, 2026}

\maketitle

\begin{abstract}
  We study the distinguishability of quantum states under local operations with
  classical communication (LOCC), separable, and positive-partial-transpose
  (PPT) measurements, focusing on \emph{locally diagonal orthogonally invariant}
  (LDOI) states---those invariant under local diagonal orthogonal twirling.
  This class includes many important families such as Werner states, isotropic
  states, X-states, and Dicke states. We show that optimal PPT and separable
  measurements for distinguishing LDOI states can always be taken to be LDOI,
  and the LOCC supremum can be approached by LDOI LOCC POVMs, enabling a
  dimensional reduction from $n^4$ to $O(n^2)$ in the associated optimization
  problems. We establish efficiently computable bounds on the distinguishability
  of orthonormal LDOI bases and prove that for a broad class of such
  bases---including all two-qubit cases---the LOCC supremum equals the PPT and
  separable optima. More generally, we show the gap
  between PPT and LOCC distinguishability is at most $(n-2)/(2n^2)$ for local
  dimension $n$.
\end{abstract}

\section{Introduction}

Quantum state distinguishability is a fundamental problem in quantum
information theory~\cite{bandyopadhyay2011more, bennett1999quantum,
ghosh2002local, horodecki2003local, virmani2001optimal, walgate2002nonlocality,
walgate2000local}: given an ensemble of quantum states $\{(p_i, \rho_i)\}$,
what is the maximum probability of correctly identifying which state was
selected? The answer depends critically on what class of measurements is
allowed. Global measurements can always achieve the optimal success probability
given by the Holevo-Helstrom bound~\cite{holevo1973bounds,
helstrom1969quantum}, but in many scenarios---particularly in distributed
quantum information processing---the parties performing the measurement may be
restricted to \emph{local operations with classical communication}
(LOCC)~\cite{chitambar2014everything}, \emph{separable
measurements}~\cite{duan2009distinguishability, bandyopadhyay2015limitations},
or \emph{positive-partial-transpose} (PPT)
measurements~\cite{cosentino2013positive, yu2014distinguishability}. These measurement classes
form a hierarchy $\mathrm{LOCC} \subseteq \mathrm{SEP} \subseteq \mathrm{PPT}
\subseteq \mathrm{Global}$, where each inclusion can be strict.

Determining the optimal distinguishability under these restricted measurement
classes is generally a difficult problem. While semidefinite programming
provides exact characterizations for both global and PPT
distinguishability~\cite{cosentino2013positive, bandyopadhyay2015limitations},
and hierarchies of SDPs can approximate separable
distinguishability~\cite{bandyopadhyay2015limitations}, the resulting
optimization problems can be computationally intensive, and exact formulas are
known only for specific families of states~\cite{bandyopadhyay2015limitations,
yu2012four}. Understanding when
these different measurement classes achieve the same optimal value, and
quantifying the gaps between them, remains an active area of research.

In this work, we focus on the class of \emph{locally diagonal orthogonally
invariant} (LDOI) states---those that are invariant under the local diagonal
orthogonal twirl map~\cite{chruscinski2006class, johnston2019pairwise, nechita2021graphical, nechita2021diagonal}. This class encompasses many important families of quantum
states including Werner states~\cite{werner1989quantum}, isotropic
states~\cite{horodecki1999reduction}, X-states~\cite{quesada2012quantum},
mixtures of Dicke states~\cite{yu2016separability, tura2018separability}, and
various entanglement witnesses such as the Choi witness~\cite{ha2011one}. Our
first main result shows that for any ensemble of LDOI states, optimal PPT and
separable measurements can always be taken to be LDOI, and the LOCC supremum
can be approached by LDOI measurements (Theorem~\ref{thm:ldoi_sdp_simplifies}),
enabling a significant dimensional reduction in the associated optimization
problems.

The technique of twirling to project onto an invariant subspace is
well-established in quantum information theory---it is used, for example, in the
study of Werner and isotropic state
entanglement~\cite{werner1989quantum,horodecki1999reduction}. Our contribution
is in identifying that the LDOI class is simultaneously broad enough to capture
many important families of states and structured enough for the resulting
optimization problems to remain tractable.

We establish efficiently computable upper and lower bounds on the success
probability of distinguishing orthonormal LDOI bases. We focus on uniform
ensembles over orthonormal bases because basis discrimination with a uniform
prior is a fundamental primitive in quantum information: it arises naturally in
superdense coding~\cite{bennett1992communication}, random access
codes~\cite{ambainis1999dense}, and entanglement-assisted
communication~\cite{bennett1999quantum}, and every orthonormal basis composed of
LDOI states---Bell bases, generalized Bell bases, and Dicke-type
bases---falls within this framework. These bases can be
parameterized by a pair of matrices $(U,A)$ where $U$ is unitary and $A$
satisfies certain orthogonality constraints (Definition~\ref{defn:LDOI_basis}).
Our lower bound for LOCC measurements can be computed in polynomial time via
the Hungarian algorithm for the Assignment Problem~\cite{kuhn1955hungarian},
while our upper bound for PPT measurements requires solving a semidefinite
program that is significantly smaller than the general PPT distinguishability
SDP---exploiting the block structure of LDOI states to reduce the search space
from dimension $n^4$ to $3n^2$.

For a broad class of LDOI bases---including all two-qubit LDOI bases---our
upper and lower bounds coincide, yielding an exact closed-form formula showing
that $\opt_{\locc} = \opt_{\sep} =
\opt_{\ppt}$ (Corollaries~\ref{cor:large_U}
and~\ref{cor:ppt_dist_2x2}). This generalizes previous results on parametrized
Bell states~\cite{bandyopadhyay2013tight, bandyopadhyay2015limitations} to the
entire class of orthonormal LDOI bases in dimension two. More generally, we
prove that the gap between PPT and LOCC distinguishability for any LDOI basis
is at most $(n-2)/(2n^2)$, which vanishes as the dimension grows and achieves
its maximum value of $1/16$ when $n=4$
(Corollary~\ref{cor:close_LDOI_LOCC_PPT}). The LOCC lower bound evaluates to
exactly $1/2 - (n-2)/(2n^2)$ for the Fourier basis.

The remainder of this paper is organized as follows. After establishing
notation and preliminaries, Section~\ref{sec:LDOI_dist} introduces LDOI states
and bases and establishes their block structure.
Section~\ref{sec:LDOI_distinguishability} proves our main distinguishability
bounds. Section~\ref{sec:conclusion} concludes with a discussion of open
problems and future directions.

\subsection*{Preliminaries}
We assume the reader is familiar with the notions of quantum information as
contained in~\cite{nielsen2002quantum} and~\cite{wilde2013quantum}. We will make
use of the notation and terminology found in~\cite{watrous2018theory}. Let $\X =
\complex^n$ denote a finite-dimensional complex Euclidean space with a fixed
standard basis $\{\ket{1}, \ldots, \ket{n} \}$ for some positive integer $n$. We
use sets $\Lin(\X), \Pos(\X), \Herm(\X), \Density(\X)$, and $\Unitary(\X)$ to
represent the set of linear operators, positive semidefinite operators,
Hermitian operators, density operators, and unitary operators acting on the
space $\X$. For a vector $\ket{\phi}$ we adopt the convention of representing
the corresponding density operator as $\phi = \ketbra{\phi}{\phi}$. We denote
the transpose mapping $T : \Lin(\X) \rightarrow \Lin(\X)$ as the positive
mapping defined as $T(X) = X^{\t}$ for all $X \in \Lin(\X)$. The partial
transpose is a mapping on $\X \otimes \Y$ defined as
\begin{equation}
    \pt_{\X} = T \otimes \I_{\Y}.
\end{equation}
We write $\I_n$ for the $n \times n$ identity matrix and $\1_n$ for the $n \times n$ all-ones matrix (whose entries are all equal to $1$). We denote by $S_n$ the symmetric group on $\{1,\ldots,n\}$, i.e., the set of all permutations of $n$ elements.

\subsection*{Quantum state distinguishability}

We now formalize the state discrimination problem introduced above. We consider
bipartite quantum states shared between two parties, Alice and Bob, each holding
a subsystem. An \emph{ensemble} is a collection of pairs
\begin{equation}
    \E = \left\{
        \left(p_1, \rho_1\right), \ldots,
        \left(p_N, \rho_N\right)
        \right\},
\end{equation}
where each $\rho_i \in \Density(\X \otimes \Y)$ is a density operator
representing a quantum state shared between Alice and Bob, and $p_i > 0$ is the
prior probability with which $\rho_i$ is selected, satisfying $\sum_{i=1}^N p_i
= 1$. In the state discrimination task, a random index $k \in \{1, \ldots, N\}$
is selected according to the distribution $\{p_i\}$, Alice and Bob are provided
with the state $\rho_k$, and their goal is to determine $k$ by performing
a measurement on their respective portions of the shared state.
In most of this paper, the ensemble states are pure states $\rho_k =
\ketbra{\phi_k}{\phi_k}$ for unit vectors $\ket{\phi_k} \in \X \otimes \Y$,
though the definitions and Theorem~\ref{thm:ldoi_sdp_simplifies} apply to
mixed-state ensembles as well.

The optimal success probability depends critically on what class of measurements
is permitted. If Alice and Bob can perform arbitrary joint (global)
measurements, the maximum success probability is given by the Holevo-Helstrom
theorem. However, in many practical scenarios---particularly in distributed
quantum information processing---Alice and Bob are spatially separated and can
only perform \emph{local operations with classical communication}
(LOCC)~\cite{chitambar2014everything}. This restriction can strictly reduce the
distinguishability of certain ensembles. Two natural relaxations of LOCC
measurements are \emph{separable measurements}
(SEP)~\cite{duan2009distinguishability, bandyopadhyay2013tight,
bandyopadhyay2015limitations}, whose POVM elements lie in the cone of separable
operators (finite sums of positive product operators), and
\emph{positive-partial-transpose measurements}
(PPT)~\cite{cosentino2013positive, cosentino2013small, yu2014distinguishability,
bandyopadhyay2015limitations}, which require that the measurement operators
remain positive under partial transpose. These classes form a strict hierarchy
\begin{equation}
    \text{LOCC} \subseteq \text{SEP} \subseteq \text{PPT} \subseteq \text{Global}.
\end{equation}

Computing the LOCC supremum is generally intractable, as LOCC protocols can
involve arbitrary rounds of classical communication. Separable
measurements are characterized by a difficult convex optimization problem.
However, PPT measurements admit a tractable semidefinite programming
characterization, which we now describe. 

For an ensemble $\E = \{(p_i, \rho_i)\}_{i=1}^N$, the optimal success
probability under PPT measurements can be characterized via semidefinite
programming~\cite{cosentino2013positive, bandyopadhyay2015limitations}. A
measurement is described by a positive operator-valued measure (POVM) $\{P_k\}$,
where $P_k$ represents the measurement operator associated with outcome $k$. The
success probability is $\sum_{k=1}^N p_k \ip{P_k}{\rho_k}$. For PPT measurements, we require each $P_k$ and its
partial transpose $\pt_{\X}(P_k)$ to be positive semidefinite. The primal and
dual semidefinite programs characterizing $\opt_{\ppt}(\E)$ are
\begin{center}
  \begin{minipage}{5in}
    \centerline{\underline{Primal problem}}\vspace{-3mm}
    \begin{equation} \label{eq:ppt-primal}
      \begin{aligned}
        \text{maximize:} \quad & \sum_{k=1}^N p_k\ip{P_k}{\rho_k} \\
            \text{subject to:} \quad & \sum_{k=1}^N P_k = \I_{\X \otimes \Y}, \\
                              & P_1, \ldots, P_N \in \Pos(\X \otimes \Y), \\
                              & \pt_{\X}(P_1), \ldots, \pt_{\X}(P_N) \in \Pos(\X \otimes \Y).
    \end{aligned}
    \end{equation}
  \end{minipage}\vspace*{0.1in}
  \begin{minipage}{5in}
    \centerline{\underline{Dual problem}}\vspace{-3mm}
    \begin{equation} \label{eq:ppt-dual}
      \begin{aligned}
        \text{minimize:} \quad & \tr(H) \\
            \text{subject to:} \quad & H - p_k \rho_k - \pt_{\X}(Q_k) \in \Pos(\X \otimes \Y),
                              \quad k = 1,\ldots, N, \\
                              & H \in \Herm(\X \otimes \Y), \\
                              & Q_1, \ldots, Q_N \in \Pos(\X \otimes \Y).
    \end{aligned}
    \end{equation}
  \end{minipage}
\end{center}
Setting $Q_k = 0$ in Equation~\eqref{eq:ppt-dual} and taking the partial transpose (which preserves feasibility because $\pt_{\X}$ is self-adjoint) yields the simplified dual
\begin{center}
  \begin{minipage}{5in}
    \centerline{\underline{Dual problem (upper bound)}}\vspace{-3mm}
    \begin{equation} \label{eq:ppt-dual-constrained}
      \begin{aligned}
        \text{minimize:} \quad & \tr(H) \\
            \text{subject to:} \quad & H - p_k\pt_{\X}(\rho_k) \in \Pos(\X \otimes \Y),
                              \quad k = 1,\ldots, N, \\
                              & H \in \Herm(\X \otimes \Y).
    \end{aligned}
    \end{equation}
  \end{minipage}
\end{center}
The primal problem directly optimizes over PPT measurements, while the dual
provides a certificate for upper bounds on the optimal value. The constrained
dual in Equation~\eqref{eq:ppt-dual-constrained} therefore follows from fixing $Q_k = 0$ and using the self-adjointness of the partial transpose, yielding a simpler optimization problem at the cost of producing only an upper bound. If we define $\alpha$, $\beta$, and
$\beta^{\prime}$ to represent the optimal values of the primal
(Equation~\eqref{eq:ppt-primal}), the dual (Equation~\eqref{eq:ppt-dual}), and
the constrained dual (Equation~\eqref{eq:ppt-dual-constrained}), respectively,
then by weak duality we have $\alpha \leq \beta \leq \beta^{\prime}$. When the
primal and dual achieve the same optimal value ($\alpha = \beta$), strong
duality holds and the optimal PPT distinguishability is exactly determined.

Throughout this paper, we denote ensembles of quantum states explicitly as
$\E = \{(p_i, \rho_i)\}_{i=1}^N$ where $p_i$ are the prior probabilities and
$\rho_i$ are the density matrices. For an orthonormal basis $\eta =
\{\ket{\phi_1}, \ldots, \ket{\phi_N}\}$, we write $\E = \{(1/N,
\ketbra{\phi_i}{\phi_i})\}_{i=1}^N$ to denote the ensemble with uniform prior
distribution. We use the notation $\opt_{\ppt}(\E)$ and
$\opt_{\sep}(\E)$ for the optimal success probabilities under PPT and
separable measurements, and $\opt_{\locc}(\E)$ for the supremum over LOCC
measurements (since LOCC is not topologically closed, the supremum may not be
attained).

\section{Locally diagonal orthogonally invariant states and bases}
\label{sec:LDOI_dist}

Let $n \geq 2$ be an integer (the case $n=1$ is trivial) and let $\X = \Y = \complex^n$ be complex
Euclidean spaces. Let $\DO(\X)$ denote the set of diagonal matrices with
diagonal entries equal to $\pm 1$ acting on $\X$. (These are sometimes called
diagonal sign matrices or diagonal orthogonal matrices, as each element $O \in
\DO(\X)$ satisfies $O^2 = \I$ and $O = O^*$.) Consider the \emph{local
diagonal orthogonal twirl} map
$\Phi_{\O} : \Lin(\X \otimes \Y) \rightarrow \Lin(\X \otimes \Y)$
defined by
\begin{equation}
  \Phi_{\O}(A) \defeq \frac{1}{2^n}\sum_{O \in \DO(\X)} (O \otimes O)A(O \otimes O).
\end{equation}
Since $|\DO(\X)| = 2^n$, this map can be thought of as an average of $A$ over
symmetric local diagonal orthogonal conjugations. We begin with a few notes
about $\Phi_\O$, all of which are straightforward to verify. Here $\ppt(\X : \Y)$ denotes the cone of operators whose partial transpose on $\X$ is positive semidefinite, and $\sep(\X : \Y)$ denotes the cone of separable (convex combinations of product) operators on $\X \otimes \Y$.

\begin{itemize}
  \item $\Phi_{\O}$ is a quantum channel (i.e., completely positive and trace-preserving).

  \item $\Phi_{\O}$ is self-dual (with respect to the usual trace inner product): $\Phi_{\O}^* = \Phi_{\O}$.

  \item If $A \in \ppt(\X : \Y)$ then $\Phi_{\O}(A) \in \ppt(\X : \Y)$.

  \item If $A \in \sep(\X : \Y)$ then $\Phi_{\O}(A) \in \sep(\X : \Y)$.
\end{itemize}

The twirl map $\Phi_\O$ has a simple form in terms of the computational
basis (see \cite{johnston2019pairwise, nechita2021diagonal}, for example): if
the entry of $A \in \Lin(\X \otimes \Y)$ corresponding to the basis matrix
$\ketbra{i}{j} \otimes \ketbra{k}{\ell}$ is denoted by $a_{i,j;k,\ell}$ then
\begin{equation}
  \Phi_\O(A) =
  \sum_{i,j=1}^n a_{i,i;j,j}\ketbra{i}{i} \otimes \ketbra{j}{j} +
  \sum_{\substack{i,j=1 \\ i \neq j}}^n a_{i,j;i,j}\ketbra{i}{j} \otimes \ketbra{i}{j} +
  \sum_{\substack{i,j=1 \\ i \neq j}}^n a_{i,j;j,i}\ketbra{i}{j} \otimes \ketbra{j}{i}.
\end{equation}
We refer to $\Phi_{\O}$ as the \emph{local diagonal orthogonal twirl} (LDOT) map; it is the orthogonal projection onto the set of
matrices whose only non-zero entries are located in the $\ketbra{i}{i} \otimes
\ketbra{j}{j}$, $\ketbra{i}{j} \otimes \ketbra{i}{j}$, or $\ketbra{i}{j}
\otimes \ketbra{j}{i}$ positions for some $1 \leq i,j \leq n$.

A matrix $A \in \Lin(\X \otimes \Y)$ is called \emph{locally diagonal orthogonally
invariant (LDOI)} if $\Phi_\O(A) = A$. Many important families of
quantum states and operators studied in quantum information theory are LDOI,
including Werner states, isotropic states, X-states, mixtures of Dicke states,
and various entanglement witnesses~\cite{werner1989quantum,
horodecki1999reduction, quesada2012quantum, yu2016separability,
tura2018separability, ha2011one}. Detailed examples of these families are
provided in Section~\ref{subsec:LDOI_examples}.

To illustrate the block structure of LDOI matrices, consider first the case
$n=2$. The two-qubit LDOI states, known as X-states~\cite{quesada2012quantum},
have matrix representation of the form
\begin{equation}
  \rho = \begin{bmatrix}
    * & \cdot & \cdot & * \\
    \cdot & * & * & \cdot \\
    \cdot & * & * & \cdot \\
    * & \cdot & \cdot & *
  \end{bmatrix} \in \Density(\X \otimes \Y),
\end{equation}
where dots denote entries equal to $0$ and asterisks ($*$) denote potentially
non-zero entries.

In all dimensions, LDOI states have a very simple block matrix structure much
like that of the X-states: we can write them as a direct sum of a $n \times n$
matrix and $n(n-1)/2$ different $2 \times 2$ matrices as follows:
\begin{equation}
  \begin{aligned}
    \rho & = \left(\sum_{i,j=1}^n \rho_{i,j;i,j}\ketbra{i}{j} \otimes \ketbra{i}{j}\right) \\
    & \quad + \sum_{\substack{i,j=1 \\ i > j}}^n \big( \rho_{i,i;j,j}\ketbra{i}{i} \otimes \ketbra{j}{j} + \rho_{i,j;j,i}\ketbra{i}{j} \otimes \ketbra{j}{i} + \rho_{j,i;i,j}\ketbra{j}{i} \otimes \ketbra{i}{j} + \rho_{j,j;i,i}\ketbra{j}{j} \otimes \ketbra{i}{i} \big).
  \end{aligned}
\end{equation}
For example, if $n = 3$ then the LDOI states $\rho \in \Density(\X \otimes \Y)$
are the ones with a matrix representation of the form
\begin{equation}
  \rho = \left[\begin{array}{@{}ccc|ccc|ccc@{}}
    * &  \cdot &  \cdot &  \cdot & * &  \cdot &  \cdot &  \cdot & * \\
    \cdot &  * &  \cdot & * &  \cdot &  \cdot &  \cdot &  \cdot &  \cdot \\
    \cdot &  \cdot &  * &  \cdot &  \cdot &  \cdot & * &  \cdot &  \cdot \\\hline
    \cdot & * &  \cdot &  * &  \cdot &  \cdot &  \cdot &  \cdot &  \cdot \\
    * &  \cdot &  \cdot &  \cdot &  * &  \cdot &  \cdot &  \cdot & * \\
    \cdot &  \cdot &  \cdot &  \cdot &  \cdot &  * &  \cdot & * &  \cdot \\\hline
    \cdot &  \cdot & * &  \cdot &  \cdot &  \cdot &  * &  \cdot &  \cdot \\
    \cdot &  \cdot &  \cdot &  \cdot &  \cdot & * &  \cdot &  * &  \cdot \\
    * &  \cdot &  \cdot &  \cdot & * &  \cdot &  \cdot &  \cdot &  *
  \end{array}\right],
\end{equation}
where we can see that such a $\rho$ is a direct sum of the $3 \times 3$
submatrix defined by rows $1$, $5$, and $9$, as well as the $2 \times 2$
submatrices defined by rows $2$ and $4$, $3$ and $7$, and $6$ and $8$.

\subsection{Structure and parameterization of LDOI matrices}
\label{subsec:LDOI_structure}

The block structure exhibited above has a remarkably simple algebraic
characterization: every LDOI matrix can be uniquely parameterized by a triple of
$n \times n$ matrices sharing a common diagonal. This result, established
in~\cite{nechita2021diagonal}, significantly reduces the complexity of working
with LDOI matrices and provides both theoretical and computational advantages.

\begin{prop}[\cite{nechita2021diagonal}]\label{prop:triple_param}
  The space of LDOI matrices is in bijection with triples $(A,B,C)$ of $n
  \times n$ complex matrices satisfying $\text{diag}(A) = \text{diag}(B) =
  \text{diag}(C)$, where $\text{diag}(M)$ denotes the vector of diagonal entries
  of $M$. Specifically, an LDOI matrix $X \in \Lin(\X \otimes \Y)$ corresponds
  to the triple $(A,B,C)$ via
  \begin{equation}\label{eq:triple_to_ldoi}
    X = \sum_{i,j=1}^n a_{ij} \ketbra{i}{i} \otimes \ketbra{j}{j} +
    \sum_{\substack{i,j=1 \\ i\neq j}}^n b_{ij} \ketbra{i}{j} \otimes \ketbra{i}{j} +
    \sum_{\substack{i,j = 1 \\ i \neq j}}^n c_{ij} \ketbra{i}{j} \otimes \ketbra{j}{i}.
  \end{equation}
\end{prop}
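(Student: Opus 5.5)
The plan is to build the bijection directly from the explicit formula for $\Phi_\O$ stated above. First I will check that the map $(A,B,C)\mapsto X$ in Equation~\eqref{eq:triple_to_ldoi} is well defined and lands in the LDOI matrices. Write $X$ in the basis $\ketbra{i}{j}\otimes\ketbra{k}{\ell}$. Then $X$ is supported only on entries of type $(i,i;j,j)$, $(i,j;i,j)$ with $i\neq j$, and $(i,j;j,i)$ with $i\neq j$. For diagonal $O=\mathrm{diag}(o_1,\dots,o_n)$ with $o_m=\pm1$, conjugating by $O\otimes O$ multiplies the $(i,j;k,\ell)$ entry by $o_io_jo_ko_\ell$. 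On each of the three supported types this product equals $1$, because each index appears an even number of times. So $(O\otimes O)X(O\otimes O)=X$ for every $O$, and therefore $\Phi_\O(X)=X$.

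Next I will check injectivity, and this is where the common-diagonal condition is needed. The three families of basis matrices $\ketbra{i}{i}\otimes\ketbra{j}{j}$, $\ketbra{i}{j}\otimes\ketbra{i}{j}$ ($i\ne j$), and $\ketbra{i}{j}\otimes\ketbra{j}{i}$ ($i\ne j$) are pairwise distinct matrix units. For example, $\ketbra{i}{j}\otimes\ketbra{i}{j}$ and $\ketbra{i}{j}\otimes\ketbra{j}{i}$ coincide only when $i=j$, and both differ from $\ketbra{i}{i}\otimes\ketbra{j}{j}$ when $i\neq j$. Hence $X$ determines all $a_{ij}$ and all off-diagonal $b_{ij},c_{ij}$. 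The diagonal entries $b_{ii}$ and $c_{ii}$ never appear in the formula, since the sums exclude $i=j$. The constraint $\mathrm{diag}(B)=\mathrm{diag}(C)=\mathrm{diag}(A)$ fixes them as $b_{ii}=c_{ii}=a_{ii}$. The triple is therefore uniquely recovered from $X$. (Without this constraint the diagonals of $B$ and $C$ would be free and the correspondence would fail to be bijective.)

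Finally I will check surjectivity. Let $X$ be LDOI, so $X=\Phi_\O(X)$. The displayed formula for $\Phi_\O$ writes $X$ as a combination of exactly these matrix units. Set $a_{ij}=x_{i,i;j,j}$, $b_{ij}=x_{i,j;i,j}$ and $c_{ij}=x_{i,j;j,i}$ for $i\ne j$, and $b_{ii}=c_{ii}=a_{ii}=x_{i,i;i,i}$; this triple maps to $X$. Along the way I would verify the twirl formula itself. Averaging $o_io_jo_ko_\ell$ over independent uniform signs gives $1$ if each index occurs with even multiplicity in $(i,j,k,\ell)$ and $0$ otherwise. The surviving patterns are exactly $i=j,k=\ell$; $i=k,j=\ell$; and $i=\ell,j=k$. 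These overlap only on the fully diagonal entries $i=j=k=\ell$.

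The only delicate step is this bookkeeping of overlaps among the three patterns. The diagonal entries $(i,i;i,i)$ belong to all three, and this is precisely why the triple must share a common diagonal rather than being three independent matrices. Everything else is routine.
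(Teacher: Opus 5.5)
Your proof is correct and takes essentially the same route the paper relies on. The paper gives no separate proof: it cites Singh and Nechita and rests the statement on the explicit computational-basis formula for $\Phi_{\O}$, which is exactly the sign-averaging of $o_io_jo_ko_\ell$ you verify, and from which invariance, injectivity (distinct matrix units, with the diagonals of $B$ and $C$ fixed only by the common-diagonal convention) and surjectivity from $X=\Phi_{\O}(X)$ follow as you describe.
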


This parameterization reveals that the LDOI subspace has dimension $3n^2 - 2n$
over the reals (for Hermitian matrices), compared to dimension $n^4$ for general
$n \times n$ bipartite matrices. The shared diagonal constraint $\text{diag}(A) = \text{diag}(B) = \text{diag}(C)$ ensures that the coefficients of each $\ketbra{i}{i} \otimes \ketbra{j}{j}$ basis element are consistent across the triple.

The connection between the triple parameterization and the block structure
described earlier is now transparent: the matrix $A$ encodes the $n \times n$
``diagonal block'' (entries of the form $\ketbra{i}{i} \otimes \ketbra{j}{j}$),
while $B$ and $C$ encode the $n(n-1)/2$ distinct off-diagonal blocks. For each
pair $i < j$, the $4 \times 4$ block on the subspace spanned by $\{\ket{ii}, \ket{ij}, \ket{ji}, \ket{jj}\}$ decomposes as the direct sum
\begin{equation}
  \begin{bmatrix}
    a_{ii} & b_{ij} \\
    b_{ji} & a_{jj}
  \end{bmatrix} \oplus
  \begin{bmatrix}
    a_{ij} & c_{ij} \\
    c_{ji} & a_{ji}
  \end{bmatrix}.
\end{equation}

This structure has important implications for quantum information-theoretic
properties. The positivity of an LDOI density matrix $\rho$
corresponding to $(A,B,C)$ can be characterized through its block-diagonal
structure~\cite{nechita2021diagonal}: (i) the $n \times n$ matrix $B$ restricted
to the diagonal subspace $\{\ket{ii}\}$ must be positive semidefinite, i.e.,
$B \succeq 0$; and (ii) for each $i < j$, the $2 \times 2$ block
$\begin{psmallmatrix} a_{ij} & c_{ij} \\ c_{ji} & a_{ji} \end{psmallmatrix}$
must be positive semidefinite, which requires $a_{ij}, a_{ji} \geq 0$ and
$a_{ij}a_{ji} \geq |c_{ij}|^2$. The condition $a_{ii}a_{jj} \geq |b_{ij}|^2$
for $i \neq j$ follows from $B \succeq 0$. These block-wise constraints are
far simpler to verify than checking positive semidefiniteness of the full
$n^2 \times n^2$ matrix.

Because of this block structure, the only pure states $\rho =
\ketbra{\phi}{\phi}$ that are LDOI are those of the form
\begin{equation}
  \begin{aligned}
    \ket{\phi} & = \sum_{k=1}^n \gamma_k\ket{k} \otimes \ket{k} \quad \text{or} \\
    \ket{\phi} & = \alpha\ket{i} \otimes \ket{j} + \beta\ket{j} \otimes \ket{i} \quad \text{for some} \quad 1 \leq i \neq j \leq n.
  \end{aligned}
\end{equation}

\subsection{Properties of the LDOI subspace}
\label{subsec:LDOI_properties}

The set of LDOI matrices enjoys several important structural properties that
make it amenable to both theoretical analysis and numerical computation. We
summarize the key facts:

\begin{itemize}
  \item \textbf{Vector subspace:} The set of LDOI matrices forms a linear
  subspace of $\Lin(\X \otimes \Y)$. This follows immediately from the
  linearity of the LDOT map: if $\Phi_{\O}(A) = A$ and
  $\Phi_{\O}(B) = B$, then $\Phi_{\O}(\alpha A + \beta B) =
  \alpha A + \beta B$ for any scalars $\alpha, \beta \in \complex$.

  \item \textbf{Orthogonal projection:} The LDOT map $\Phi_{\O}$ is the
  orthogonal projection onto the LDOI subspace with respect to the
  Hilbert-Schmidt inner product $\ip{A}{B} = \tr(A^* B)$. This follows
  from the fact that $\Phi_{\O}$ is self-adjoint
  ($\Phi_{\O}^* = \Phi_{\O}$) and idempotent
  ($\Phi_{\O}^2 = \Phi_{\O}$).

  \item \textbf{Preservation under quantum channels:} If $\rho$ is an LDOI
  state and $\Lambda$ is a quantum channel that commutes with the LDOT map
  (i.e., $\Lambda \circ \Phi_{\O} = \Phi_{\O} \circ \Lambda$),
  then $\Lambda(\rho)$ is also LDOI. Important examples include local
  depolarizing channels and certain entanglement-breaking channels.

  \item \textbf{Dimension reduction:} As established in
  Proposition~\ref{prop:triple_param}, the LDOI subspace has dimension $3n^2 - 2n$ over $\real$ (for Hermitian matrices), compared to dimension $n^4$ for
  the full space $\Herm(\X \otimes \Y)$. This dramatic reduction---from
  quartic to quadratic in $n$---is the key to computational tractability.
\end{itemize}

These properties have important consequences for quantum state distinguishability.
The preservation of LDOI structure under PPT, separable, and LOCC measurements
(as we will show in Section~\ref{sec:LDOI_distinguishability}) means that
when distinguishing LDOI states, we can restrict our search for optimal
measurements to the LDOI subspace. Combined with the dimension reduction, this
leads to semidefinite programs with $O(n^2)$ variables rather than $O(n^4)$,
making exact computation feasible even for moderately large dimensions.

\subsection{Examples of LDOI states}
\label{subsec:LDOI_examples}

We now present several important families of LDOI states, showing their explicit
structure via the triple parameterization. These examples illustrate the
diversity of LDOI states and their relevance to quantum information theory.
Additional examples of LDOI states can also be found in~\cite{nechita2021diagonal}.

\begin{example}[Werner states]\label{exam:werner}
  The Werner states~\cite{werner1989quantum} are a one-parameter family of
  states on $\complex^n \otimes \complex^n$ defined by
  \begin{equation}
    \rho_{\text{W}}(p) = p\frac{\I_{n^2} + F}{n(n+1)} + (1-p)\frac{\I_{n^2} - F}{n(n-1)},
  \end{equation}
  where $F = \sum_{i,j=1}^n \ketbra{ij}{ji}$ is the swap operator and $p \in
  [0,1]$. These states are invariant under all symmetric local unitaries $U
  \otimes U$, which implies they are LDOI. In the triple parameterization, the
  Werner state $\rho_{\text{W}}(p)$ corresponds to
  \begin{equation}
    a_{ij} = \alpha + \beta \delta_{i,j}, \qquad
    b_{ij} = \begin{cases}
      \alpha + \beta & \text{if } i=j,\\
      0 & \text{if } i \neq j,
    \end{cases} \qquad
    c_{ij} = \begin{cases}
      \alpha + \beta & \text{if } i=j,\\
      \beta & \text{if } i \neq j,
    \end{cases}
  \end{equation}
  where $\alpha = \frac{p}{n(n+1)} + \frac{1-p}{n(n-1)}$ and $\beta =
  \frac{p}{n(n+1)} - \frac{1-p}{n(n-1)}$. Equivalently,
  $A = \alpha \1_{n} + \beta \I_n$, $B = (\alpha + \beta)\I_n$, and
  $C = \beta \1_{n} + \alpha \I_n$. Werner states are PPT if and only if $p \geq 1/2$, since the partially
  transposed state has eigenvalues $\alpha$ (multiplicity $n^2-1$) and
  $(2p-1)/n$. Due to their symmetry, Werner states are separable if and only if
  they are PPT, so Werner states with $p \geq 1/2$ are separable. This makes them a convenient test family for our bounds. The
  matrices $A$, $B$, and $C$ are real symmetric, and $\rho_{\text{W}}(p)$ is a
  normalized Hermitian density operator for every real $p \in [0,1]$.
\end{example}

We note that Werner states possess the full $U \otimes U$ symmetry, which is
strictly finer than the local diagonal orthogonal symmetry exploited here. The
$U \otimes U$-invariant subspace is only $2$-dimensional (spanned by the
identity and the swap), whereas the LDOI subspace has dimension $3n^2 - 2n$.
When our LDOI framework is applied to Werner states, the resulting bounds from
the upcoming Theorem~\ref{thm:ldoi_basis_one_copy} are consistent with known optimal
discrimination results, but may not be as tight as those obtainable from the
full $U \otimes U$ reduction. This is the natural trade-off: the LDOI framework
applies uniformly to a much broader class of states, at the cost of not fully
exploiting the additional symmetry present in specific subfamilies.

\begin{example}[X-states]\label{exam:xstates}
  As noted earlier, X-states~\cite{quesada2012quantum} are precisely the LDOI
  states when $n=2$. With entries labeled as
  \begin{equation}
    \rho = \begin{bmatrix}
      \rho_{11} & 0 & 0 & \rho_{14} \\
      0 & \rho_{22} & \rho_{23} & 0 \\
      0 & \rho_{32} & \rho_{33} & 0 \\
      \rho_{41} & 0 & 0 & \rho_{44}
    \end{bmatrix} \in \Density(\complex^4),
  \end{equation}
  the triple parameterization obtained from Equation~\eqref{eq:triple_to_ldoi} is
  \begin{equation}
    A = \begin{bmatrix}
      \rho_{11} & \rho_{22} \\
      \rho_{33} & \rho_{44}
    \end{bmatrix}, \quad
    B = \begin{bmatrix}
      \rho_{11} & \rho_{14} \\
      \rho_{41} & \rho_{44}
    \end{bmatrix}, \quad
    C = \begin{bmatrix}
      \rho_{11} & \rho_{23} \\
      \rho_{32} & \rho_{44}
    \end{bmatrix}.
  \end{equation}
  Many important two-qubit states are X-states, including Bell-diagonal states
  (when $\rho_{22} = \rho_{33}$ and $\rho_{23} = \rho_{32}$) and various mixed
  states arising in quantum communication protocols. The entries
  $\rho_{ij}$ may be complex, although density-operator normalization enforces
  $\sum_i \rho_{ii} = 1$ and positivity requires the matrix to be Hermitian.
\end{example}

\begin{example}[Maximally entangled state]\label{exam:maxent}
  The standard maximally entangled state $\ket{\phi^+} = \frac{1}{\sqrt{n}}
  \sum_{i=1}^n \ket{ii}$ has density matrix $\phi^+ = \ketbra{\phi^+}{\phi^+}
  = \frac{1}{n}\sum_{i,j=1}^n \ketbra{i}{j} \otimes \ketbra{i}{j}$,
  which is LDOI. Its triple parameterization is
  \begin{equation}
    A = \frac{1}{n}\I_n, \quad
    B = \frac{1}{n}\1_n, \quad
    C = \frac{1}{n}\I_n,
  \end{equation}
  where $\I_n$ is the $n \times n$ identity matrix and $\1_n$ is the all-ones
  matrix. To verify: the $A$ matrix encodes coefficients of
  $\ketbra{i}{i} \otimes \ketbra{j}{j}$, which only appear when $i=j$ in $\phi^+$
  (giving $a_{ii} = 1/n$, $a_{ij} = 0$ for $i \neq j$). The $B$ matrix encodes
  coefficients of $\ketbra{i}{j} \otimes \ketbra{i}{j}$ for $i \neq j$, all of
  which equal $1/n$. The $C$ matrix encodes coefficients of
  $\ketbra{i}{j} \otimes \ketbra{j}{i}$ for $i \neq j$, none of which appear
  in $\phi^+$ (giving $c_{ij} = 0$ for $i \neq j$).
  This state is maximally entangled and hence not separable, but it can be
  perfectly distinguished from any orthogonal state by a global measurement.
\end{example}

\begin{example}[Product states]\label{exam:product}
  Any product state $\rho = \rho_A \otimes \rho_B$ where both $\rho_A$ and
  $\rho_B$ are diagonal in the computational basis is LDOI. For instance, the
  product basis states $\ket{ij}$ for $1 \leq i,j \leq n$ are LDOI. The state
  $\ketbra{ij}{ij}$ has triple parameterization
  \begin{equation}
    A_{k\ell} = \begin{cases}
      1 & \text{if } k=i, \ell=j, \\
      0 & \text{otherwise},
    \end{cases}\qquad
    B_{k\ell} = C_{k\ell} = \begin{cases}
      a_{kk} & \text{if } k = \ell, \\
      0 & \text{otherwise},
    \end{cases}
  \end{equation}
  so that $B$ and $C$ are diagonal matrices satisfying the constraint
  $\text{diag}(A) = \text{diag}(B) = \text{diag}(C)$. Product bases are
  perfectly distinguishable by local measurements, demonstrating that not all
  LDOI bases exhibit nonlocal distinguishability phenomena.
  The matrices appearing in this example are real, and the basis vectors
  $\ket{ij}$ are normalized product states.
\end{example}

These examples demonstrate the breadth of LDOI states: from maximally entangled
(Example~\ref{exam:maxent}) to separable (Example~\ref{exam:product}), and from
well-studied families with known separability criteria
(Example~\ref{exam:werner}) to general parametric classes
(Example~\ref{exam:xstates}). In the context of state distinguishability, LDOI
states are particularly amenable to analysis due to their reduced parameter
space and tractable SDP characterizations.

\subsection{Orthonormal LDOI bases}
\label{subsec:LDOI_bases}

We now characterize which orthonormal bases of $\X \otimes \Y$ consist entirely
of LDOI pure states. As the preceding discussion shows, such bases must contain
exactly $n$ states supported on the diagonal subspace (of the form $\sum_k \gamma_k \ket{kk}$)
and $n(n-1)$ Schmidt-rank-2 states (of the form $\alpha \ket{ij} + \beta
\ket{ji}$). The diagonal states are maximally entangled only when all $|\gamma_k| = 1/\sqrt{n}$;
in general they have Schmidt rank at most $n$. The following definition makes this structure precise.

\begin{definition}\label{defn:LDOI_basis}
  Let $n \geq 2$ be an integer, let $\X = \Y = \complex^n$ be complex Euclidean
  spaces. We say that a set $\eta \subset \X \otimes \Y$ is an
  \emph{orthonormal LDOI basis} of $\X \otimes \Y$ if there exist $U \in
  \Unitary(\X)$ and $A \in \Lin(\X)$ with $|a_{i,j}|^2 + |a_{j,i}|^2 = 1$ for
  all $1 \leq i \neq j \leq n$, such that $\eta = \{\ket{\phi_{i,j}} : 1 \leq
  i,j \leq n\}$, where
  \begin{equation}
    \begin{aligned}
      \ket{\phi_{i,i}} & = \sum_{k=1}^n u_{i,k}\ket{k} \otimes \ket{k} \quad \text{for all} \quad 1 \leq i \leq n, \\
      \ket{\phi_{i,j}} & = a_{i,j}\ket{i} \otimes \ket{j} + \overline{a_{j,i}}\ket{j} \otimes \ket{i} \quad \text{for all} \quad 1 \leq i < j \leq n, \quad \text{and} \\
      \ket{\phi_{j,i}} & = a_{j,i}\ket{i} \otimes \ket{j} - \overline{a_{i,j}}\ket{j} \otimes \ket{i} \quad \text{for all} \quad 1 \leq i < j \leq n.
    \end{aligned}
  \end{equation}
\end{definition}

Indeed, in the above definition each $\ket{\phi_{k,k}}$ is immediately
orthogonal to each $\ket{\phi_{i,j}}$ when $i \neq j$ since they have disjoint
supports, unitarity of $U$ is equivalent to the fact that the
$\ket{\phi_{k,k}}$'s have unit length and are orthogonal to each other, and the
constraint $|a_{i,j}|^2 + |a_{j,i}|^2 = 1$ is equivalent to the fact that each
$\ket{\phi_{i,j}}$ has unit length when $i \neq j$. Note that the diagonal
entries of the matrix $A$ are unconstrained and do not appear in the
parameterization of $\eta$, as they play no role in defining the basis states.
An orthonormal LDOI basis of $\X \otimes \Y$ contains exactly $n^2$ vectors, so
in the terminology of the preliminaries we have $N = n^2$ measurement operators
when distinguishing such bases.

\begin{remark}[Orthonormal CLDUI bases]
Numerous variants of Definition~\ref{defn:LDOI_basis} are possible, since numerous variants of LDOI states are possible. For example, if we instead considered matrices that are \emph{conjugate locally diagonal unitarily invariant (CLDUI)} throughout this work (as in \cite{johnston2019pairwise}), then we could analogously define \emph{orthonormal CLDUI bases} of $\X \otimes \Y$ to be just as in Definition~\ref{defn:LDOI_basis}, but with $A$ upper triangular (so that each $\ket{\phi_{i,j}}$ is a pure product state).

Our reason for working with LDOI states (and the LDOI twirl, and orthonormal LDOI bases) in this work is that it is the largest family of states we are aware of that arises in a natural way from a local twirl, so we get more general results this way. All of our results concerning distinguishability of orthonormal LDOI bases apply automatically to orthonormal CLDUI bases, for example, since CLDUI states are LDOI.
\end{remark}

\begin{example}\label{exam:bell_basis}
  When $\X = \Y = \complex^2$, the Bell basis consists of the states
  \begin{equation}
    \begin{aligned}
      \ket{\phi_{1,1}} & = \frac{1}{\sqrt{2}}(\ket{11} + \ket{22}), & 
      \ket{\phi_{1,2}} & = \frac{1}{\sqrt{2}}(\ket{12} + \ket{21}), \\
      \ket{\phi_{2,1}} & = \frac{1}{\sqrt{2}}(\ket{12} - \ket{21}), & 
      \ket{\phi_{2,2}} & = \frac{1}{\sqrt{2}}(\ket{11} - \ket{22}).
    \end{aligned}
  \end{equation}
  The Bell basis is an orthonormal LDOI basis in the sense of
  Definition~\ref{defn:LDOI_basis} with
  \begin{equation}
    A = U = \frac{1}{\sqrt{2}}
    \begin{bmatrix}
      1 & 1 \\ 1 & -1
    \end{bmatrix}.
  \end{equation}
  At the other extreme, the product basis
  $\{\ket{11},\ket{12},\ket{21},\ket{22}\}$ is also an orthonormal LDOI basis,
  with
  \begin{equation}
    A = \begin{bmatrix}
      0 & 0 \\ 1 & 0
    \end{bmatrix} \quad \text{and} \quad U = \begin{bmatrix}
      1 & 0 \\ 0 & 1
    \end{bmatrix}.
  \end{equation}
  In both of these examples, $A$ is non-unique since its diagonal entries are
  irrelevant.
\end{example}

\section{Distinguishability of LDOI states}
\label{sec:LDOI_distinguishability}

We begin by establishing a simple but important result: when distinguishing
LDOI states, we may restrict to LDOI measurement operators without loss of
generality.

\begin{theorem}\label{thm:ldoi_sdp_simplifies}
  Suppose $\{\rho_1,\ldots,\rho_N\}$ is a set of LDOI quantum states. Then the
  optimal value of the semidefinite program~\eqref{eq:ppt-primal} for PPT
  measurements and the analogous convex optimization problem for separable
  measurements are attained by LDOI measurement operators. Furthermore, the
  supremum over LOCC measurements is unchanged and can be approached
  arbitrarily closely by LDOI LOCC POVMs.
\end{theorem}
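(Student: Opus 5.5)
The plan is to twirl an arbitrary optimal measurement with $\Phi_{\O}$ and check that every relevant property survives. Given any POVM $\{P_1,\ldots,P_N\}$ (PPT, separable, or LOCC), define $\widetilde{P}_k \defeq \Phi_{\O}(P_k)$. The first step is to show that $\{\widetilde{P}_k\}$ is again a POVM with the same success probability. Positivity holds because $\Phi_{\O}$ is a channel. Completeness holds because $\Phi_{\O}$ is linear and $\Phi_{\O}(\I)=\I$ (each conjugation $(O\otimes O)\I(O\otimes O)=\I$). For the objective, self-duality $\Phi_{\O}^*=\Phi_{\O}$ and $\Phi_{\O}(\rho_k)=\rho_k$ give
\[
\sum_k p_k\ip{\widetilde{P}_k}{\rho_k}=\sum_k p_k\ip{P_k}{\Phi_{\O}(\rho_k)}=\sum_k p_k\ip{P_k}{\rho_k}.
\]
Each $\widetilde{P}_k$ is LDOI by idempotence, $\Phi_{\O}^2=\Phi_{\O}$.

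For the PPT and separable cases, I would invoke the bulleted facts from Section~\ref{sec:LDOI_dist}: $\Phi_{\O}$ preserves $\ppt(\X:\Y)$ and $\sep(\X:\Y)$. For PPT this is immediate, since $\pt_{\X}\bigl((O\otimes O)P(O\otimes O)\bigr)=(O\otimes O)\pt_{\X}(P)(O\otimes O)$ because $O$ is real diagonal, so $O^{\t}=O$. For SEP, conjugation by $O\otimes O$ maps product operators to product operators. The feasible sets are compact (closed subsets of $\{0\le P_k\le \I\}$), so optimal POVMs exist. Applying the twirl to an optimal POVM then yields an LDOI optimal POVM in the same class.

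The LOCC case is the only delicate step. Here the twirled measurement must be shown to be implementable by LOCC. The protocol I would give is as follows. Alice samples $O\in\DO(\X)$ uniformly using local randomness and sends a description of $O$ to Bob classically. Both parties apply the local unitary $O$, and then they run the original LOCC protocol for $\{P_k\}$. By the Heisenberg picture, the resulting POVM element for outcome $k$ is
\[
\frac{1}{2^n}\sum_{O}(O\otimes O)P_k(O\otimes O)=\Phi_{\O}(P_k),
\]
so the twirled POVM is LOCC. Since LOCC is not closed, I would pick POVMs whose success probability is within $\varepsilon$ of the supremum and twirl each one. This produces LDOI LOCC POVMs with identical success probabilities, so the supremum over LDOI LOCC POVMs equals the full LOCC supremum and is approached arbitrarily closely. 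The main thing to verify carefully is that prepending shared randomness and local unitaries keeps the protocol within LOCC as formally defined. This holds because the step amounts to one round of local instruments (random unitaries) followed by classical communication.
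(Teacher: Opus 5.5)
Your proposal is correct and follows essentially the same route as the paper: twirl each POVM element with $\Phi_{\O}$, use self-duality together with $\Phi_{\O}(\rho_k)=\rho_k$ to keep the success probability fixed, invoke preservation of the PPT and separable cones, and realize the twirl in LOCC by applying a random local sign-unitary $O\otimes O$ before running the original protocol. Your extra checks are useful details that the paper leaves implicit: that $\Phi_{\O}(\I)=\I$, that the PPT and SEP feasible sets are compact so an optimum exists, and that Alice can generate the randomness locally and communicate it rather than assuming shared randomness.
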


\begin{proof}
  For PPT and separable measurements, the result follows from the
  self-adjointness of the LDOT map $\Phi_{\O}$. If each $\rho_k$ is LDOI then
  \begin{equation}\label{eq:ldoi_self_dual}
    \ip{P_k}{\rho_k} = \ip{P_k}{\Phi_{\O}(\rho_k)} = \ip{\Phi_{\O}(P_k)}{\rho_k},
  \end{equation}
  where the second equality uses $\Phi_{\O}^* = \Phi_{\O}$.
  Furthermore, the operators $\Phi_{\O}(P_1)$, $\ldots$,
  $\Phi_{\O}(P_N)$ satisfy all of the same constraints that $P_1$, $\ldots$, $P_N$ do (since
  $P_k \in \ppt(\X : \Y)$ implies $\Phi_{\O}(P_k) \in \ppt(\X : \Y)$, and
  similarly for separability).

  For LOCC measurements, we use a shared-randomness argument. Given any LOCC
  POVM $\{P_k\}$, consider the following protocol: using shared randomness,
  Alice and Bob sample a diagonal sign matrix $O$ uniformly at random,
  apply the local unitaries $O \otimes O$, then execute the original LOCC
  protocol for $\{P_k\}$. The effective POVM elements are
  $\Phi_{\O}(P_k) = \frac{1}{2^n}\sum_O (O \otimes O) P_k (O \otimes O)$,
  which remain LOCC since LOCC is closed under local unitaries and shared
  randomness. For LDOI states $\rho_k$, the success probability is unchanged
  by~\eqref{eq:ldoi_self_dual}, so the LOCC supremum can be approached by
  LDOI LOCC POVMs.
\end{proof}

Theorem~\ref{thm:ldoi_sdp_simplifies} shows that when distinguishing LDOI
states, we can restrict our search for optimal measurements to LDOI
measurements without loss of generality. This has significant computational
implications: for an ensemble of LDOI states on $\complex^n \otimes \complex^n$,
the standard PPT semidefinite program requires optimizing over $n^4$ variables,
but by exploiting the block structure of LDOI matrices
(Subsection~\ref{subsec:LDOI_structure}), the optimization can be restricted
to the $3n^2-2n$ dimensional LDOI subspace. For even moderately large
dimensions (e.g., $n = 10$), this represents a reduction from $10{,}000$
variables to $280$ variables---more than a $30$-fold decrease in problem size.
This dimensional reduction makes previously intractable problems computationally
feasible.

Since LDOI states are invariant under local diagonal orthogonal symmetries, any
optimization (like computing the optimal success probabilities) can be
restricted to the LDOI subspace, drastically reducing the problem size. Our
main result of this section establishes bounds on the probability of
successfully distinguishing LDOI bases via LOCC, separable, and PPT
measurements.

The following theorem applies the $(U,A)$ parameterization of
Definition~\ref{defn:LDOI_basis} to uniform ensembles over orthonormal LDOI
bases, yielding computable lower and upper bounds on the optimal success
probabilities.

\begin{theorem}\label{thm:ldoi_basis_one_copy}
  Let $n \geq 2$ be an integer, let $\X = \Y = \complex^n$, and let $A \in \Lin(\X)$
  and $U \in \Unitary(\X)$ parameterize an orthonormal LDOI basis of $\X \otimes \Y$
  as in Definition~\ref{defn:LDOI_basis}. Define the ensemble $\E = \{(1/n^2,
  \ketbra{\phi_{i,j}}{\phi_{i,j}}) : 1 \leq i,j \leq n\}$ with uniform prior
  distribution, where $\ket{\phi_{i,j}}$ are the basis vectors from
  Definition~\ref{defn:LDOI_basis}. The success probability of correctly distinguishing the states in
  $\E$ via LOCC satisfies
  \begin{align}\label{eq:ldoi_lb}
    \opt_{\locc}(\E) & \geq \frac{1}{n^2}\left(\sum_{\substack{i,j=1 \\ i \neq j}}^n \max\left\{ |a_{i,j}|^2, |a_{j,i}|^2 \right\} + \max_{\sigma \in S_n}\left\{ \sum_{i=1}^n |u_{i,\sigma(i)}|^2\right\}\right).
  \end{align}
  Furthermore, if $c_1,\ldots,c_n$ are any real numbers for which
  \begin{align}\label{eq:ldoi_c_bound}
    c_i \geq \max_{1 \leq k \leq n} \left\{ |u_{k,i}|^2 \right\} \quad \text{and} \quad c_ic_j \geq |a_{i,j}|^2|a_{j,i}|^2
  \end{align}
  for all $1 \leq i \leq n$ (first inequality) and all $1 \leq i < j \leq n$ (second inequality), then
  \begin{align}\label{eq:ldoi_ub}
    \opt_{\ppt}(\E) & \leq \frac{1}{n^2}\left(\sum_{\substack{i,j=1 \\ i \neq j}}^n \max\left\{ |a_{i,j}|^2, |a_{j,i}|^2 \right\} + \sum_{i=1}^n c_i\right).
  \end{align}
\end{theorem}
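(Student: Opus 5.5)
The plan is to prove the two inequalities separately. For the lower bound I will construct an explicit LOCC protocol: both parties measure in the computational basis. For the upper bound I will exhibit a feasible point $H$ of the constrained dual~\eqref{eq:ppt-dual-constrained}, which bounds $\opt_{\ppt}(\E)$ by weak duality. $H$ will be diagonal in the computational basis, and its trace is exactly the right-hand side of~\eqref{eq:ldoi_ub}.

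\emph{Lower bound.} Alice and Bob each measure in the standard basis, which is a product (hence LOCC) measurement, and obtain an outcome $(k,\ell)$. The guessing rule depends on whether $k=\ell$.
\begin{itemize}
\item If $k=\ell$, only the diagonal states $\ket{\phi_{i,i}}$ contribute, with probabilities $|u_{i,k}|^2$. Fix $\sigma\in S_n$ attaining the maximum in~\eqref{eq:ldoi_lb}, and guess $\phi_{\sigma^{-1}(k),\sigma^{-1}(k)}$. Then $\phi_{i,i}$ is correctly identified with probability $|u_{i,\sigma(i)}|^2$.
\item If $\{k,\ell\}=\{i,j\}$ with $i<j$, only $\phi_{i,j}$ and $\phi_{j,i}$ contribute. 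On outcome $(i,j)$ their probabilities are $|a_{i,j}|^2$ and $|a_{j,i}|^2$; on outcome $(j,i)$ they are $|a_{j,i}|^2$ and $|a_{i,j}|^2$. In each case we guess the more likely state. The two outcomes together contribute $2\max\{|a_{i,j}|^2,|a_{j,i}|^2\}$, which equals the sum of the ordered-pair terms $(i,j)$ and $(j,i)$ in~\eqref{eq:ldoi_lb}.
\end{itemize}
Summing these contributions and multiplying by the uniform prior $1/n^2$ gives~\eqref{eq:ldoi_lb}.

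\emph{Upper bound.} Write $m_{k\ell}=\max\{|a_{k,\ell}|^2,|a_{\ell,k}|^2\}$ for $k\ne \ell$. Set
\[
H=\frac{1}{n^2}\Big(\sum_{k} c_k\ketbra{kk}{kk}+\sum_{k\neq \ell} m_{k\ell}\ketbra{k\ell}{k\ell}\Big),
\]
whose trace is the right-hand side of~\eqref{eq:ldoi_ub}. It remains to check $H-\pt_{\X}(\phi_{i,j})\succeq 0$ for every basis vector, using the rule $\pt_{\X}(\ketbra{ab}{cd})=\ketbra{cb}{ad}$.

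For $\phi_{i,i}$, the partial transpose has diagonal entries $|u_{i,k}|^2$ on $\ket{kk}$. Its only off-diagonal entries lie in the $2\times 2$ blocks on $\{\ket{k\ell},\ket{\ell k}\}$, with modulus $|u_{i,k}||u_{i,\ell}|$ and zero diagonal. Positivity therefore reduces to two conditions:
\begin{itemize}
\item $c_k\ge |u_{i,k}|^2$, which is assumed in~\eqref{eq:ldoi_c_bound};
\item $m_{k\ell}\ge |u_{i,k}||u_{i,\ell}|$. This holds because $m_{k\ell}\ge 1/2$ (since $|a_{k,\ell}|^2+|a_{\ell,k}|^2=1$), while AM--GM and $\sum_k|u_{i,k}|^2=1$ give $|u_{i,k}||u_{i,\ell}|\le 1/2$.
\end{itemize}

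For $\phi_{i,j}$ with $i\ne j$, the partial transpose has diagonal entries $|a_{i,j}|^2$ and $|a_{j,i}|^2$ on $\ket{ij}$ and $\ket{ji}$ (up to which label goes where), and these are dominated by $m_{ij}$. Its cross terms $\ketbra{ij}{ji}$ and $\ketbra{ji}{ij}$ are moved by the partial transpose into the block on $\{\ket{ii},\ket{jj}\}$. That block of $H-\pt_{\X}(\phi_{i,j})$ is $\begin{psmallmatrix} c_i & -x\\ -\bar x & c_j\end{psmallmatrix}$ with $|x|=|a_{i,j}||a_{j,i}|$. It is positive semidefinite because $c_i,c_j\ge 0$ (from the first condition in~\eqref{eq:ldoi_c_bound}) and $c_ic_j\ge|a_{i,j}|^2|a_{j,i}|^2$ (the second condition). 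All remaining diagonal entries of $H$ are nonnegative, so $H-\pt_{\X}(\phi_{i,j})\succeq 0$.

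The main obstacle is the bookkeeping of where the partial transpose moves each matrix entry. This is the step that uses the LDOI block structure of Subsection~\ref{subsec:LDOI_structure}: the $(A,C)$-type and $(B)$-type positions are swapped by $\pt_{\X}$. I will verify it carefully entry by entry; after that, every positivity check reduces to a $2\times 2$ or diagonal condition.
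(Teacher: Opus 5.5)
Your proposal is correct and follows the paper's proof essentially step for step. The lower bound uses the same computational-basis product measurement, with a permutation-based guess on the diagonal outcomes and a majority guess on each pair $\{(i,j),(j,i)\}$. The upper bound uses the same diagonal certificate $H$ (entries $c_k$ on $\ket{kk}$ and $\max\{|a_{k,\ell}|^2,|a_{\ell,k}|^2\}$ on $\ket{k\ell}$), checked blockwise in the constrained dual~\eqref{eq:ppt-dual-constrained}. One notational slip: the feasibility condition should read $H - \frac{1}{n^2}\pt_{\X}(\phi_{i,j}) \succeq 0$, which is what your unscaled block computations actually verify; your explicit use of $c_i \geq 0$ in the $\{\ket{ii},\ket{jj}\}$ block is a detail the paper leaves implicit.
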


Before we prove the above theorem, we make numerous remarks about the bounds
that it provides:

\begin{itemize}
  \item The lower bound~\eqref{eq:ldoi_lb} and the upper bound~\eqref{eq:ldoi_ub} are very close to each other (see Corollary~\ref{cor:close_LDOI_LOCC_PPT}), and often even equal to each other (see Corollary~\ref{cor:large_U}). In particular, the initial term $\sum_{\substack{i,j=1 \\ i \neq j}}^n \max\left\{ |a_{i,j}|^2, |a_{j,i}|^2 \right\}$ is the same in both bounds, and it is just the second summation in each bound that might differ.
  
  \item The term $\max_{\sigma \in S_n}\left\{ \sum_{i=1}^n |u_{i,\sigma(i)}|^2\right\}$ in the bound~\eqref{eq:ldoi_lb} perhaps looks difficult to compute. However, it is actually an instance of the Assignment Problem, which can be solved in $O(n^3)$ time by the Hungarian algorithm~\cite{kuhn1955hungarian}.
  
  \item Similarly, the optimal $c_i$'s for the bound~\eqref{eq:ldoi_ub} can be found via semidefinite programming, since the constraint $c_ic_j \geq |a_{i,j}|^2|a_{j,i}|^2$ is equivalent to the matrix
    \begin{equation}
    \begin{bmatrix}
      c_i & a_{i,j}\overline{a_{j,i}} \\
      \overline{a_{i,j}}a_{j,i} & c_j
    \end{bmatrix}
  \end{equation}
  being positive semidefinite. This gives a much smaller (and thus much faster
  to evaluate numerically) semidefinite program than the general semidefinite
  program~\eqref{eq:ppt-dual} that is used to find upper bounds
  on $\opt_{\ppt}(\E)$. However, it is not always the case that there
  exist $c_i$'s for which equality is attained in bound~\eqref{eq:ldoi_ub}
  (see Example~\ref{exam:PPT_bound_not_attained}).
  
  \item One feasible choice for the $c_i$'s is $c_i = \max\left\{ \frac{1}{2}, \max_{1 \leq k \leq n} \left\{ |u_{k,i}|^2 \right\} \right\}$, since $c_i,c_j \geq 1/2$ ensures $c_ic_j \geq 1/4 \geq |a_{i,j}|^2|a_{j,i}|^2$, with the final inequality following from the fact that $|a_{i,j}|^2 + |a_{j,i}|^2 = 1$. This gives the following explicit (but slightly weaker than~\eqref{eq:ldoi_ub}) upper bound for $\opt_{\ppt}(\E)$
  \begin{align}\label{eq:ppt_dist_weaker}
    \opt_{\ppt}(\E) & \leq \frac{1}{n^2}\left(\sum_{\substack{i,j=1 \\ i \neq j}}^n \max\left\{ |a_{i,j}|^2, |a_{j,i}|^2 \right\} + \sum_{i=1}^n \max\left\{ \frac{1}{2}, \max_{1 \leq k \leq n} \left\{ |u_{k,i}|^2 \right\} \right\}\right).
  \end{align}
  
  \item The bound~\eqref{eq:ppt_dist_weaker} shows that an orthonormal LDOI basis is perfectly distinguishable by PPT measurements if and only if it consists entirely of product states. To see this, notice that $\opt_{\ppt}(\E) = 1$ implies $\max\left\{ |a_{i,j}|^2, |a_{j,i}|^2 \right\} = 1$ and $\max_{1 \leq k \leq n} \left\{ |u_{k,i}|^2 \right\} = 1$ for all $1 \leq i \neq j \leq n$. This implies that every entry in the matrices $A$ and $U$ from Definition~\ref{defn:LDOI_basis} has magnitude equal to $0$ or $1$, giving the result.
  
  \item The bound~\eqref{eq:ldoi_lb} immediately implies a universal lower bound on LOCC distinguishability for all LDOI bases (see Corollary~\ref{cor:universal_bound} below).

\item When $a_{i,j} = 1/\sqrt{2}$ for all $i \neq j$, numerical evidence (via the software described in Section~\ref{sec:software}) strongly suggests that equality holds in the bound~\eqref{eq:ldoi_ub}. That is, there exist $c_1,
  \ldots, c_n$ satisfying the constraints~\eqref{eq:ldoi_c_bound}
  such that $\opt_{\ppt}(\E)$ exactly equals the
  upper bound. This has been verified numerically for various choices of $U$
  (including identity, Hadamard, and random unitary matrices) in
  dimensions $n = 2, 3, 4$), and for the Fourier matrix in all dimensions (see Example~\ref{exam:mixed_bell_basis}).
\end{itemize}

We note that Theorem~\ref{thm:ldoi_basis_one_copy} and its corollaries are
stated for the uniform prior distribution $p_i = 1/n^2$. For non-uniform
priors, the structural reduction of
Theorem~\ref{thm:ldoi_sdp_simplifies} still applies: optimal PPT and separable
measurements can be taken to be LDOI regardless of the prior. However, the
explicit bounds change character. The LOCC lower bound~\eqref{eq:ldoi_lb}
generalizes by weighting each term by its prior probability, but the resulting
assignment problem is no longer symmetric and may not admit a closed-form
solution. The PPT upper bound~\eqref{eq:ldoi_ub} similarly generalizes with
prior-weighted diagonal entries, though the constraints on the $c_i$ become
prior-dependent. Extending these bounds to obtain tight results for non-uniform
priors remains a direction for future work.

\begin{proof}[Proof of Theorem~\ref{thm:ldoi_basis_one_copy}]
To prove the lower bound on $\opt_{\locc}(\E)$, let $\sigma \in S_n$ and consider the local measurement operators
  \begin{align}\begin{split}\label{eq:local_P_ldoi}
  P_{i,i} & = \ketbra{\sigma(i)}{\sigma(i)} \otimes \ketbra{\sigma(i)}{\sigma(i)} \quad \text{for} \quad 1 \leq i \leq n, \\
    P_{i,j} & = \begin{cases}
      \ketbra{i}{i} \otimes \ketbra{j}{j} \quad \text{if } |a_{i,j}| \geq |a_{j,i}| \\
      \ketbra{j}{j} \otimes \ketbra{i}{i} \quad \text{otherwise}
    \end{cases} \quad \text{for} \quad 1 \leq i < j \leq n, \quad \text{and} \\
    P_{j,i} & = \begin{cases}
      \ketbra{j}{j} \otimes \ketbra{i}{i} \quad \text{if } |a_{i,j}| \geq |a_{j,i}| \\
      \ketbra{i}{i} \otimes \ketbra{j}{j} \quad \text{otherwise}
    \end{cases} \quad \text{for} \quad 1 \leq i < j \leq n.
  \end{split}\end{align}
  
  A direct calculation shows that $\sum_{i,j = 1}^n P_{i,j} = \I$,
  \begin{equation}
    \ip{P_{i,i}}{\phi_{i,i}} = |u_{i,\sigma(i)}|^2 \quad \text{and} \quad \ip{P_{i,j}}{\phi_{i,j}} = \max\left\{ |a_{i,j}|^2, |a_{j,i}|^2 \right\},
  \end{equation}
  for all $1 \leq i,j \leq n$. The lower bound~\eqref{eq:ldoi_lb} now follows by maximizing over all $\sigma \in S_n$.
  
  To prove the upper bound on $\opt_{\ppt}(\E)$, for each pair $(i,j)$ with
  $1 \leq i,j \leq n$, let $\gamma_{i,j}$ be a non-negative real number. Define the
  Hermitian operator
  \begin{equation}
    H = \frac{1}{n^2}\left(\sum_{i, j = 1}^n \gamma_{i,j}Q_{i,j}\right) \in \Herm(\X \otimes \Y),
  \end{equation}
  where $Q_{i,j} = \ketbra{i}{i} \otimes \ketbra{j}{j}$. It is clear that $H$ is diagonal with non-negative entries, and is thus positive semidefinite. We now investigate which values of $\{\gamma_{i,j}\}$ result in $H$ being a feasible point of the semidefinite program~\eqref{eq:ppt-dual-constrained}, i.e.,
  \begin{equation}
    H - \frac{1}{n^2}\pt_{\X}\left(\phi_{i,j}\right) \in \Pos(\X \otimes \Y)
  \end{equation}
  for all $1 \leq i,j \leq n$.
  
  A calculation reveals that
  \begin{align}\begin{split}\label{eq:ldoi_ppt_ub_1x2_2x2}
    H - \frac{1}{n^2} \pt_{\X}\left(\phi_{k,k}\right) & =
    \frac{1}{n^2}\left(\sum_{i, j = 1}^n \gamma_{i,j} Q_{i,j} - \sum_{i,j=1}^n u_{k,i}\overline{u_{k,j}}\ketbra{i}{j} \otimes \ketbra{j}{i}\right) \\
     & = \frac{1}{n^2}\left(\sum_{i=1}^n (\gamma_{i,i} - |u_{k,i}|^2)Q_{i,i} + \sum_{\substack{i,j=1 \\ i \neq j}}^n \Big( \gamma_{i,j} Q_{i,j} - u_{k,i}\overline{u_{k,j}}\ketbra{i}{j} \otimes \ketbra{j}{i}\Big) \right).
  \end{split}\end{align}
  The bottom line of Equation~\eqref{eq:ldoi_ppt_ub_1x2_2x2} is a decomposition of $H - \pt_{\X}\left(\phi_{k,k}\right)/n^2$ as a direct sum of $1 \times 1$ and $2 \times 2$ blocks, so it is positive semidefinite if and only if $\gamma_{i,i} \geq |u_{k,i}|^2$ and $\gamma_{i,j}\gamma_{j,i} \geq |u_{k,i}|^2|u_{k,j}|^2$ for all $1 \leq i,j \leq n$, where the latter condition follows from the Schur complement criterion for the $2 \times 2$ blocks.
  
  Similarly, if $1 \leq k < \ell \leq n$ then
  \begin{align}\begin{split}\label{eq:ldoi_ppt_ub_1x2_2x2b}
    H - \frac{1}{n^2} \pt_{\X}\left(\phi_{k,\ell}\right) & =
    \frac{1}{n^2}\left(\sum_{i, j = 1}^n \gamma_{i,j} Q_{i,j} - |a_{k,\ell}|^2Q_{k,\ell} - |a_{\ell,k}|^2Q_{\ell,k} - a_{k,\ell}a_{\ell,k}\ketbra{kk}{\ell\ell} - \overline{a_{k,\ell}a_{\ell,k}}\ketbra{\ell\ell}{kk}\right) \\
    & = \frac{1}{n^2}\left(\sum_{\substack{i, j = 1\\ i,j \notin \{k,\ell\}}}^n \gamma_{i,j} Q_{i,j}\right) + \frac{1}{n^2}\Big((\gamma_{k,\ell} - |a_{k,\ell}|^2)Q_{k,\ell} + (\gamma_{\ell,k} - |a_{\ell,k}|^2)Q_{\ell,k} \\
    & \qquad \qquad \qquad \qquad + \gamma_{k,k}Q_{k,k} + \gamma_{\ell,\ell}Q_{\ell,\ell} - a_{k,\ell}a_{\ell,k}\ketbra{kk}{\ell\ell} - \overline{a_{k,\ell}a_{\ell,k}}\ketbra{\ell\ell}{kk}\Big).
  \end{split}\end{align}
  The bottom line of Equation~\eqref{eq:ldoi_ppt_ub_1x2_2x2b} is a decomposition of $H - \pt_{\X}\left(\phi_{k,\ell}\right)/n^2$ as a direct sum of $1 \times 1$ and $2 \times 2$ blocks, so it is positive semidefinite if and only if $\gamma_{k,\ell} \geq |a_{k,\ell}|^2$, $\gamma_{\ell,k} \geq |a_{\ell,k}|^2$, and $\gamma_{k,k}\gamma_{\ell,\ell} \geq |a_{k,\ell}|^2|a_{\ell,k}|^2$. A similar computation of $H - \pt_{\X}\left(\phi_{\ell,k}\right)/n^2$ when $1 \leq k < \ell \leq n$ shows that we also need $\gamma_{k,\ell} \geq |a_{\ell,k}|^2$.
  
  Altogether, this means that $H$ is a feasible point of the semidefinite program~\eqref{eq:ppt-dual-constrained} if and only if the scalars $\{\gamma_{i,j}\}$ satisfy the following constraints for all $1 \leq i,j \leq n$:
  \begin{equation}
    \begin{aligned}
      \gamma_{i,i} & \geq \max_{1 \leq k \leq n} \left\{ |u_{k,i}|^2 \right\}, & \gamma_{i,i}\gamma_{j,j} & \geq |a_{i,j}|^2|a_{j,i}|^2, \\
      \gamma_{i,j} & \geq \max\left\{ |a_{i,j}|^2, |a_{j,i}|^2 \right\}, & \gamma_{i,j}\gamma_{j,i} & \geq \max_{1 \leq k \leq n} \left\{ |u_{k,i}|^2|u_{k,j}|^2 \right\}.
    \end{aligned}
  \end{equation}
  Notice that the bottom-right constraint is redundant. To see this, observe
  that $\gamma_{i,j} \geq \max\left\{ |a_{i,j}|^2, |a_{j,i}|^2 \right\} \geq
  1/2$ (using $|a_{i,j}|^2 + |a_{j,i}|^2 = 1$), and similarly $\gamma_{j,i}
  \geq 1/2$. Therefore $\gamma_{i,j}\gamma_{j,i} \geq 1/4$. Since $U$ is
  unitary, we have $|u_{k,i}|^2|u_{k,j}|^2 \leq 1/4$ for all $i,j,k$ (when $i
  \neq j$), which implies the bottom-right constraint is automatically
  satisfied. In particular, this means that we can choose
  \begin{equation}
    \gamma_{i,j} = \max\left\{ |a_{i,j}|^2, |a_{j,i}|^2 \right\}
  \end{equation}
  for all $1 \leq i \neq j \leq n$. Setting $c_i = \gamma_{i,i}$ then shows
  that the objective value in the semidefinite
  program~\eqref{eq:ppt-dual-constrained} equals
  \begin{equation}
    \tr(H) = \frac{1}{n^2}\sum_{i,j=1}^n \gamma_{i,j} = \frac{1}{n^2}\left(\sum_{\substack{i,j=1 \\ i \neq j}}^n \max\left\{ |a_{i,j}|^2, |a_{j,i}|^2 \right\} + \sum_{i=1}^n c_i\right)
  \end{equation}
  which is exactly the desired upper bound~\eqref{eq:ldoi_ub}.
\end{proof}

While the bounds of Theorem~\ref{thm:ldoi_basis_one_copy}
perhaps look a bit complicated, there is a wide class of LDOI
bases for which the bounds simplify considerably, and we actually get equality:
whenever each column of $U$ contains a sufficiently large entry, the two
expressions collapse to the same value.

\begin{cor}\label{cor:large_U}
  Let $n \geq 2$ be an integer, let $\X = \Y = \complex^n$, and let $A \in \Lin(\X)$
  and $U \in \Unitary(\X)$ parameterize an orthonormal LDOI basis of $\X \otimes \Y$
  as in Definition~\ref{defn:LDOI_basis}. Define the ensemble $\E = \{(1/n^2,
  \ketbra{\phi_{i,j}}{\phi_{i,j}}) : 1 \leq i,j \leq n\}$ with uniform prior
  distribution, where $\ket{\phi_{i,j}}$ are the basis vectors from
  Definition~\ref{defn:LDOI_basis}. Suppose further that there exists a permutation
  $\sigma \in S_n$ such that $|u_{\sigma(i),i}| \geq 1/\sqrt{2}$ for every $1 \leq i \leq n$ (i.e., each row and column of $U$ has an entry with magnitude at least $1/\sqrt{2}$). Then
  \begin{equation}
    \opt_{\locc}(\E) = \opt_{\sep}(\E) = \opt_{\ppt}(\E) = \frac{1}{n^2}\left(\sum_{\substack{i,j=1 \\ i \neq j}}^n \max\left\{ |a_{i,j}|^2, |a_{j,i}|^2 \right\} + \sum_{i=1}^n \left( \max_{1 \leq k \leq n} \left\{ |u_{k,i}|^2 \right\}\right)\right).
  \end{equation}
\end{cor}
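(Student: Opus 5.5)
The plan is to squeeze everything between the two bounds of Theorem~\ref{thm:ldoi_basis_one_copy}. We use the chain $\opt_{\locc}(\E) \leq \opt_{\sep}(\E) \leq \opt_{\ppt}(\E)$, which holds because $\locc \subseteq \sep \subseteq \ppt$. It therefore suffices to show that the LOCC lower bound~\eqref{eq:ldoi_lb} and some instance of the PPT upper bound~\eqref{eq:ldoi_ub} both equal the displayed expression. The first summation is identical in the two bounds, so only the $U$-dependent terms need to be matched.

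For the lower bound, use the permutation supplied by the hypothesis. Since the hypothesis is phrased as $|u_{\sigma(i),i}| \geq 1/\sqrt{2}$, the assignment sum $\sum_i |u_{i,\tau(i)}|^2$ with $\tau = \sigma^{-1}$ equals $\sum_i |u_{\sigma(i),i}|^2$. Next I claim that $|u_{\sigma(i),i}|^2 = \max_k |u_{k,i}|^2$ for each $i$. Column $i$ of $U$ is a unit vector, so if one entry has squared modulus at least $1/2$, every other entry in that column has squared modulus at most $1/2$. Hence that entry is a maximum of the column (ties at exactly $1/2$ are harmless). Consequently $\max_{\tau \in S_n} \sum_i |u_{i,\tau(i)}|^2 \geq \sum_i \max_k |u_{k,i}|^2$. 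The reverse inequality is trivial, since each term is at most its column maximum. So the lower bound~\eqref{eq:ldoi_lb} is exactly the displayed expression.

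For the upper bound, choose $c_i = \max_k |u_{k,i}|^2$. The first constraint in~\eqref{eq:ldoi_c_bound} holds with equality. For the second, the hypothesis gives $c_i \geq 1/2$ for every $i$, so $c_ic_j \geq 1/4$. Moreover $|a_{i,j}|^2|a_{j,i}|^2 \leq 1/4$ by AM--GM applied to $|a_{i,j}|^2 + |a_{j,i}|^2 = 1$. This is the same argument as the one behind~\eqref{eq:ppt_dist_weaker}. Plugging these $c_i$ into~\eqref{eq:ldoi_ub} gives exactly the displayed expression.

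Combining the two bounds with the chain of inequalities forces all three optima to be equal. There is no serious obstacle here. The only delicate point is the step showing that the permutation entries are the column maxima, and this rests entirely on the unit-norm columns of $U$ together with the threshold $1/\sqrt{2}$. One should also note that $\opt_{\locc}$ is a supremum, but the lower bound is attained by the explicit local measurement in the proof of Theorem~\ref{thm:ldoi_basis_one_copy}, so nothing is lost.
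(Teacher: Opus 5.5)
Your proposal is correct and follows the paper's proof essentially step for step: the same choice $c_i = \max_k |u_{k,i}|^2$ (with $c_i \geq 1/2$ making the second constraint in~\eqref{eq:ldoi_c_bound} automatic), the same evaluation of the assignment term at the hypothesized permutation, and the same squeeze via $\opt_{\locc}(\E) \leq \opt_{\sep}(\E) \leq \opt_{\ppt}(\E)$. Your reindexing with $\tau = \sigma^{-1}$ and your unit-norm-column argument that $|u_{\sigma(i),i}|^2$ is a column maximum fill in two small points that the paper leaves implicit.
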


\begin{proof}
  By assumption there is $\sigma \in S_n$ with $|u_{\sigma(i),i}|^2 \geq 1/2$ for
  all $i$. For each column define
  \begin{equation}
    c_i = \max_{1 \leq k \leq n} \left\{ |u_{k,i}|^2 \right\}
  \end{equation}
  for $1 \leq i \leq n$, then we have $c_i \geq 1/2$. It follows that $c_ic_j
  \geq 1/4 \geq |a_{i,j}|^2|a_{j,i}|^2$, with the latter inequality following
  from the fact that $|a_{i,j}|^2 + |a_{j,i}|^2 = 1$. In other words, the
  second inequality of~\eqref{eq:ldoi_c_bound} follows from the
  first.
  
  Under this choice of $c_i$, the upper bound~\eqref{eq:ldoi_ub}
  becomes
  \begin{equation}\label{eq:ldoi_better_ppt_bound}
    \opt_{\ppt}(\E) \leq \frac{1}{n^2}\left(\sum_{\substack{i,j=1 \\ i \neq j}}^n \max\left\{ |a_{i,j}|^2, |a_{j,i}|^2 \right\} + \sum_{i=1}^n \left( \max_{1 \leq k \leq n} \left\{ |u_{k,i}|^2 \right\}\right)\right).
  \end{equation}

  For the lower bound~\eqref{eq:ldoi_lb}, we evaluate the permutation $\sigma$
  guaranteed by the hypothesis to obtain
  \begin{equation}
    \max_{\sigma' \in S_n}\left\{ \sum_{i=1}^n |u_{i,\sigma'(i)}|^2\right\}
    \geq \sum_{i=1}^n |u_{\sigma(i),i}|^2
    = \sum_{i=1}^n \left(\max_{1 \leq k \leq n} \left\{ |u_{k,i}|^2 \right\}\right),
  \end{equation}
  since each $|u_{\sigma(i),i}|^2$ is one of the maximizers for column $i$. Thus
  the lower bound specializes to
  \begin{equation} \label{eq:ldoi_better_locc_bound}
    \opt_{\locc}(\E) \geq \frac{1}{n^2}\left(\sum_{\substack{i,j=1 \\ i \neq j}}^n \max\left\{ |a_{i,j}|^2, |a_{j,i}|^2 \right\} + \sum_{i=1}^n \left( \max_{1 \leq k \leq n} \left\{ |u_{k,i}|^2 \right\}\right)\right).
  \end{equation}
  Since the bounds~\eqref{eq:ldoi_better_ppt_bound}
  and~\eqref{eq:ldoi_better_locc_bound} are the
  same as each other, the result follows.
\end{proof}

In the two-qubit case, all LDOI bases satisfy the hypotheses of
Corollary~\ref{cor:large_U}, so we get the following simple
result:

\begin{cor}\label{cor:ppt_dist_2x2}
  Let $\X = \Y = \complex^2$ and let $\alpha,\beta,\gamma,\delta \in \complex$ be such
  that $|\alpha|^2 + |\beta|^2 = |\gamma|^2 + |\delta|^2 = 1$. Define the ensemble
  $\E = \{(1/4, \ketbra{\phi_i}{\phi_i}) : 1 \leq i \leq 4\}$ with uniform prior
  distribution, where
  \begin{equation}
    \begin{aligned}
      \ket{\phi_1} & = \alpha\ket{11} + \overline{\beta}\ket{22}, & \ket{\phi_2} & = \beta\ket{11} - \overline{\alpha}\ket{22}, \\
      \ket{\phi_3} & = \gamma\ket{12} + \overline{\delta}\ket{21}, & \ket{\phi_4} & = \delta\ket{12} - \overline{\gamma}\ket{21}.
    \end{aligned}
  \end{equation}
  Then
  \begin{equation}
    \opt_{\locc}(\E) = \opt_{\sep}(\E) = \opt_{\ppt}(\E) = \frac{1}{2}\Big(\max\left\{ |\alpha|^2, |\beta|^2 \right\} + \max\left\{ |\gamma|^2, |\delta|^2 \right\}\Big).
  \end{equation}
\end{cor}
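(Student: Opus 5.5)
The plan is to realise the ensemble in Corollary~\ref{cor:ppt_dist_2x2} as an orthonormal LDOI basis in the sense of Definition~\ref{defn:LDOI_basis} and then apply Corollary~\ref{cor:large_U}. With $n=2$, choose
\begin{equation}
  U = \begin{bmatrix} \alpha & \overline{\beta} \\ \beta & -\overline{\alpha} \end{bmatrix}, \qquad
  A = \begin{bmatrix} 0 & \gamma \\ \delta & 0 \end{bmatrix}.
\end{equation}
The diagonal entries of $A$ are irrelevant, so they can be set to zero.

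First I check that these choices reproduce the given vectors. We have $\ket{\phi_{1,1}} = u_{1,1}\ket{11}+u_{1,2}\ket{22} = \alpha\ket{11}+\overline{\beta}\ket{22} = \ket{\phi_1}$, and likewise $\ket{\phi_{2,2}} = \beta\ket{11}-\overline{\alpha}\ket{22} = \ket{\phi_2}$. For the off-diagonal states, $\ket{\phi_{1,2}} = a_{1,2}\ket{12}+\overline{a_{2,1}}\ket{21} = \gamma\ket{12}+\overline{\delta}\ket{21} = \ket{\phi_3}$ and $\ket{\phi_{2,1}} = a_{2,1}\ket{12}-\overline{a_{1,2}}\ket{21} = \delta\ket{12}-\overline{\gamma}\ket{21} = \ket{\phi_4}$.

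Next I verify the hypotheses of Definition~\ref{defn:LDOI_basis}. The constraint $|a_{1,2}|^2+|a_{2,1}|^2 = |\gamma|^2+|\delta|^2 = 1$ holds by assumption. $U$ is unitary because its columns $(\alpha,\beta)^{\t}$ and $(\overline{\beta},-\overline{\alpha})^{\t}$ are unit vectors with inner product $\overline{\alpha}\,\overline{\beta}-\overline{\beta}\,\overline{\alpha}=0$. The labelling of the ensemble differs from $\phi_{i,j}$ only by a relabelling, and a relabelling does not affect any of the optimal values.

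The one substantive step is the permutation hypothesis of Corollary~\ref{cor:large_U}. Column $1$ of $U$ has entries of moduli $|\alpha|,|\beta|$, and column $2$ has the same moduli $|\beta|,|\alpha|$. Since $|\alpha|^2+|\beta|^2=1$, one of them is at least $1/\sqrt2$. If $|\alpha|\ge 1/\sqrt2$, take $\sigma=\mathrm{id}$, which uses $u_{1,1}=\alpha$ and $u_{2,2}=-\overline{\alpha}$. Otherwise $|\beta|\geq 1/\sqrt2$, and we take the transposition, which uses $u_{2,1}=\beta$ and $u_{1,2}=\overline{\beta}$. In either case each column has its large entry in a distinct row, so Corollary~\ref{cor:large_U} applies.

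Finally I evaluate the formula of Corollary~\ref{cor:large_U} with $n=2$. The off-diagonal sum counts the pairs $(1,2)$ and $(2,1)$, giving $2\max\{|\gamma|^2,|\delta|^2\}$. Each column maximum of $|u_{k,i}|^2$ equals $\max\{|\alpha|^2,|\beta|^2\}$, so the column sum is $2\max\{|\alpha|^2,|\beta|^2\}$. Dividing by $n^2=4$ gives the claimed value, with $\opt_{\locc}=\opt_{\sep}=\opt_{\ppt}$. There is no real obstacle here; the permutation check is the only step that needs care.
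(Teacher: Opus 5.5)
Your proposal is correct and follows the paper's proof: you realize the ensemble as an orthonormal LDOI basis with $n=2$ and invoke Corollary~\ref{cor:large_U}. Your explicit choice of $U$ and $A$, together with your check that the large entries lie along a permutation (the identity or the transposition, using $|u_{1,1}|=|u_{2,2}|$ and $|u_{1,2}|=|u_{2,1}|$), is slightly more careful than the paper's one-line remark that each column of a $2\times 2$ unitary has an entry of modulus at least $1/\sqrt{2}$.
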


\begin{proof}
  The states $\{\ket{\phi_i}\}$ form an orthonormal LDOI basis, and every
  $2 \times 2$ unitary matrix has the property that each column has an entry
  with magnitude at least $1/\sqrt{2}$. Therefore Corollary~\ref{cor:large_U}
  applies and gives the claimed formula immediately.
\end{proof}

\begin{remark}
  When $|\alpha| = |\beta| = |\gamma| = |\delta| = 1/\sqrt{2}$, the ensemble $\E$ in Corollary~\ref{cor:ppt_dist_2x2} reduces to the uniform ensemble over the standard Bell basis, for which $\opt_{\locc}(\E) = \opt_{\sep}(\E) = \opt_{\ppt}(\E) = 1/2$.
  More generally, the parametrized families of states in Corollary~\ref{cor:ppt_dist_2x2} generalize the parametrized Bell states studied in Example~3 of~\cite{bandyopadhyay2013tight} and in~\cite{bandyopadhyay2015limitations}.
  Corollary~\ref{cor:ppt_dist_2x2} shows that the explicit formula for optimal distinguishability holds not just for these specific parametrized families, but for \emph{all} orthonormal LDOI bases in the two-qubit setting.
\end{remark}

\begin{cor}\label{cor:bell_optimal}
  Among all two-qubit orthonormal LDOI bases, the uniform ensemble over the Bell
  basis achieves the minimum LOCC (and PPT and separable) distinguishability.
  That is, for any two-qubit LDOI basis with uniform ensemble $\E = \{(1/4,
  \ketbra{\phi_i}{\phi_i}) : 1 \leq i \leq 4\}$,
  \begin{equation}
    \opt_{\locc}(\E) \geq \frac{1}{2},
  \end{equation}
  with equality if and only if the basis is the Bell basis (up to local unitaries).
\end{cor}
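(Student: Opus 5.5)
The plan is to obtain the bound from the closed-form value in Corollary~\ref{cor:ppt_dist_2x2}, and to handle the equality case by an explicit phase-cancelling local unitary. First I will put an arbitrary two-qubit orthonormal LDOI basis into the form of Corollary~\ref{cor:ppt_dist_2x2}. By Definition~\ref{defn:LDOI_basis} with $n=2$, the two diagonal-subspace states come from the rows of a $2\times 2$ unitary $U$. Any such unitary has rows of the form $(\alpha,\overline{\beta})$ and $e^{i\varphi}(\beta,-\overline{\alpha})$ with $|\alpha|^2+|\beta|^2=1$. The global phase $e^{i\varphi}$ of a basis vector does not affect $\ketbra{\phi}{\phi}$, and hence does not affect the ensemble. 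The two off-diagonal states are already of the form $\gamma\ket{12}+\overline{\delta}\ket{21}$ and $\delta\ket{12}-\overline{\gamma}\ket{21}$, with $\gamma=a_{1,2}$ and $\delta=a_{2,1}$. So every such ensemble is covered by Corollary~\ref{cor:ppt_dist_2x2}, and
\begin{equation}
\opt_{\locc}(\E)=\tfrac12\big(\max\{|\alpha|^2,|\beta|^2\}+\max\{|\gamma|^2,|\delta|^2\}\big).
\end{equation}

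The inequality is then immediate. Since $|\alpha|^2+|\beta|^2=1$, we have $\max\{|\alpha|^2,|\beta|^2\}\ge 1/2$, and likewise $\max\{|\gamma|^2,|\delta|^2\}\ge 1/2$; hence $\opt_{\locc}(\E)\ge 1/2$. Equality holds if and only if both maxima equal $1/2$, that is, $|\alpha|=|\beta|=|\gamma|=|\delta|=1/\sqrt2$. The same value applies to $\opt_{\sep}$ and $\opt_{\ppt}$, since Corollary~\ref{cor:ppt_dist_2x2} shows all three coincide. Conversely, the Bell basis attains $1/2$, as noted in the remark following Corollary~\ref{cor:ppt_dist_2x2}. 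Moreover, LOCC, SEP and PPT optimal values are invariant under local unitaries, so every basis that is locally unitarily equivalent to the Bell basis also attains $1/2$.

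The remaining step, and the only one needing care, is to show that $|\alpha|=|\beta|=|\gamma|=|\delta|=1/\sqrt2$ forces local unitary equivalence to the Bell basis. Write $\alpha=e^{ia}/\sqrt2$, $\beta=e^{ib}/\sqrt2$, $\gamma=e^{ic}/\sqrt2$ and $\delta=e^{id}/\sqrt2$, and apply the local diagonal unitary $\mathrm{diag}(1,e^{i\theta})\otimes\mathrm{diag}(1,e^{i\psi})$. This multiplies $\ket{22}$ by $e^{i(\theta+\psi)}$ and $\ket{21}$ by $e^{i\theta}$ while fixing $\ket{11}$ and $\ket{12}$. Up to global phases:
\begin{itemize}
  \item in $\ket{\phi_1}$ and $\ket{\phi_2}$, the relative phases of $\ket{22}$ against $\ket{11}$ are $-(a+b)$ and $\pi-(a+b)$, and both shift by $\theta+\psi$;
  \item in $\ket{\phi_3}$ and $\ket{\phi_4}$, the relative phases of $\ket{21}$ against $\ket{12}$ are $-(c+d)$ and $\pi-(c+d)$, and both shift by $\theta-\psi$.
\end{itemize}
Choosing $\theta+\psi=a+b$ and $\theta-\psi=c+d$ is always solvable, and it turns the four states into the four Bell states up to global phases, which completes the characterization. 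The main obstacle I anticipate is precisely this phase bookkeeping, in particular checking that the two pairs of relative phases are controlled by the independent combinations $\theta+\psi$ and $\theta-\psi$. I also need to interpret ``the Bell basis'' as an unordered basis up to global phases of its vectors, and to read ``up to local unitaries'' as the equivalence under which the equality case is characterized.
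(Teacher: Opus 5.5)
Your proposal is correct and follows the paper's proof: apply Corollary~\ref{cor:ppt_dist_2x2} and use $\max\{x,1-x\}\ge 1/2$ with equality iff $x=1/2$. Your reduction of an arbitrary $2\times 2$ unitary to the $(\alpha,\beta)$ form, the local-unitary invariance of the three optima, and the explicit phase-cancelling unitary $\mathrm{diag}(1,e^{i\theta})\otimes\mathrm{diag}(1,e^{i\psi})$ supply details the paper leaves implicit when it says the equality case ``corresponds to the Bell basis.'' One small slip: that unitary does not fix $\ket{12}$ (it multiplies it by $e^{i\psi}$), but the relative phase shift $\theta-\psi$ you state is correct, so the argument goes through.
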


\begin{proof}
  By Corollary~\ref{cor:ppt_dist_2x2}, for any two-qubit LDOI basis with uniform ensemble $\E$ we have
  \begin{equation}
    \opt_{\locc}(\E) = \frac{1}{2}\Big(\max\left\{ |\alpha|^2, |\beta|^2 \right\} + \max\left\{ |\gamma|^2, |\delta|^2 \right\}\Big),
  \end{equation}
  where $|\alpha|^2 + |\beta|^2 = |\gamma|^2 + |\delta|^2 = 1$. Since $\max\{a,
  b\} \geq 1/2$ when $a+b=1$ and $a,b \geq 0$, with equality if and only if $a
  = b = 1/2$, we have
  \begin{equation}
    \max\left\{ |\alpha|^2, |\beta|^2 \right\} \geq \frac{1}{2} \quad \text{and} \quad \max\left\{ |\gamma|^2, |\delta|^2 \right\} \geq \frac{1}{2},
  \end{equation}
  with equality in both inequalities if and only if $|\alpha| = |\beta| =
  |\gamma| = |\delta| = 1/\sqrt{2}$, which corresponds to the Bell basis.
\end{proof}

\begin{cor}\label{cor:universal_bound}
  For any orthonormal LDOI basis of $\X \otimes \Y$ where $\X = \Y = \complex^n$, let
  $\E = \{(1/n^2, \ketbra{\phi_{i,j}}{\phi_{i,j}}) : 1 \leq i,j \leq n\}$ denote the
  uniform ensemble, where $\ket{\phi_{i,j}}$ are the basis vectors. Then
  \begin{equation}
    \opt_{\locc}(\E) \geq \frac{1}{2} - \frac{n - 2}{2n^2}.
  \end{equation}
  In particular, this function is minimized when $n = 4$, giving the dimension-independent bound
  \begin{equation}
    \opt_{\locc}(\E) \geq \frac{7}{16} = 0.4375
  \end{equation}
  for all uniform ensembles over orthonormal LDOI bases in any dimension.
\end{cor}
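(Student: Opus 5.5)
We start from the lower bound~\eqref{eq:ldoi_lb} of Theorem~\ref{thm:ldoi_basis_one_copy} and bound its two summations separately from below.

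\textbf{Off-diagonal term.} For each of the $n(n-1)$ ordered pairs $i \neq j$, the constraint $|a_{i,j}|^2 + |a_{j,i}|^2 = 1$ forces $\max\{|a_{i,j}|^2, |a_{j,i}|^2\} \geq 1/2$. Hence the off-diagonal summation is at least $n(n-1)/2$.

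\textbf{Assignment term.} This is the only nontrivial step. We need a permutation $\sigma$ with $\sum_i |u_{i,\sigma(i)}|^2 \geq 1$. The matrix $D = (|u_{i,j}|^2)$ is doubly stochastic because $U$ is unitary. By the Birkhoff--von Neumann theorem, $D = \sum_\sigma \lambda_\sigma P_\sigma$ with $\lambda_\sigma \geq 0$ and $\sum_\sigma \lambda_\sigma = 1$. Then
\[
\sum_\sigma \lambda_\sigma \sum_i |u_{i,\sigma(i)}|^2 = \ip{D}{D} = \sum_{i,j}|u_{i,j}|^4 \geq \frac{1}{n^2}\Big(\sum_{i,j}|u_{i,j}|^2\Big)^2 = 1,
\]
where the inequality is Cauchy--Schwarz and $\sum_{i,j}|u_{i,j}|^2 = n$. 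Since a weighted average is at least $1$, some single permutation achieves at least $1$. As an alternative, one can average $\sum_i |u_{i,\sigma(i)}|^2$ over a uniformly random $\sigma$, which gives exactly $1$ because each entry is hit with probability $1/n$. This second argument is simpler, and we would use it.

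\textbf{Combining.} The two estimates give
\[
\opt_{\locc}(\E) \geq \frac{1}{n^2}\Big(\frac{n(n-1)}{2} + 1\Big) = \frac{n^2 - n + 2}{2n^2} = \frac12 - \frac{n-2}{2n^2}.
\]

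\textbf{Minimization over $n$.} Set $f(n) = (n-2)/(2n^2)$ and treat $n$ as real. Then $f'(n) \propto n^2 - 2n(n-2) = -n^2 + 4n$, which vanishes at $n = 4$, so $f$ is maximized at $n = 4$ with $f(4) = 2/32 = 1/16$. Checking the neighbouring integers confirms this: $f(3) = 1/18$ and $f(5) = 3/50$, both smaller than $1/16$. Hence for every $n \geq 2$ we get $\opt_{\locc}(\E) \geq 1/2 - 1/16 = 7/16$.

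The main obstacle is the assignment bound, and the random-permutation averaging argument resolves it directly.
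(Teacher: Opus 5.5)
Your proposal is correct and follows essentially the same route as the paper. The paper bounds each off-diagonal maximum below by $1/2$, bounds the assignment term below by $1$ by appealing to double stochasticity of $(|u_{i,j}|^2)$ (your uniform-random-permutation average is exactly the one-line justification behind that claim, and it only needs $\sum_{i,j}|u_{i,j}|^2 = n$), and then finds the extremum at $n=4$ with the same derivative computation. Your derivative numerator $-n^2+4n = n(4-n)$ changes sign from positive to negative at $n=4$, and that sign change is what makes $n=4$ the global maximum of $(n-2)/(2n^2)$ for $n \geq 2$; checking $n=3$ and $n=5$ is then just a sanity check.
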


\begin{proof}
  Since $\max\left\{ |a_{i,j}|^2, |a_{j,i}|^2 \right\} \geq 1/2$ for all $i
  \neq j$ and $\max_{\sigma \in S_n} \sum_{i=1}^n |u_{i,\sigma(i)}|^2 \geq 1$
  (because the matrix with entries $|u_{i,j}|^2$ is doubly stochastic), the
  lower bound~\eqref{eq:ldoi_lb} immediately gives
  \begin{equation}
    \opt_{\locc}(\E) \geq \frac{1}{n^2} \left( \frac{n(n-1)}{2} + 1 \right) = \frac{1}{n^2} \frac{n^2 - n + 2}{2} = \frac{n^2 - n + 2}{2n^2} = \frac{1}{2} - \frac{n - 2}{2n^2}.
  \end{equation}
  To find the dimension-independent lower bound, we minimize $f(n) = 1/2 - (n-2)/(2n^2)$ for $n \geq 2$. Taking the derivative,
  \begin{equation}
    f'(n) = \frac{n - 4}{2n^3},
  \end{equation}
  which is negative for $n < 4$ and positive for $n > 4$, so the minimum occurs at $n = 4$, giving $f(4) = 7/16$.
\end{proof}

The lower bound expression~\eqref{eq:ldoi_lb} of
Theorem~\ref{thm:ldoi_basis_one_copy} evaluates to the universal bound value
for specific bases. Consider an orthonormal LDOI basis where $U$ is the $n
\times n$ Fourier matrix (normalized so that all entries have magnitude
$1/\sqrt{n}$) and $A$ is defined by $a_{i,j} = 1/\sqrt{2}$ for all $i \neq j$
(with arbitrary diagonal entries). For this basis, the Fourier matrix has the
property that all entries have equal magnitude, making it maximally ``spread
out'' across rows and columns. This structure minimizes the optimal assignment
value: the maximum over permutations $\sigma \in S_n$ of $\sum_{i=1}^n
|u_{i,\sigma(i)}|^2$ equals exactly $n(1/n) = 1$, the minimum possible value
given unitarity. Combined with $\sum_{i \neq j} \max\{|a_{i,j}|^2, |a_{j,i}|^2\}
= n(n-1)(1/2)$, the lower bound~\eqref{eq:ldoi_lb} evaluates to precisely
$\frac{1}{2} - \frac{n-2}{2n^2}$, matching the universal bound. This shows that
the lower bound formula cannot be uniformly tightened.

While Corollary~\ref{cor:universal_bound} establishes
the rigorous lower bound of $7/16$ for all dimensions, it is natural to ask
whether a uniform bound of $1/2$ holds for all uniform ensembles over LDOI bases.
For $n = 2$, Corollary~\ref{cor:ppt_dist_2x2} shows that $\opt_{\locc}(\E)
\geq 1/2$ for all two-qubit LDOI uniform ensembles, with equality achieved by the
uniform ensemble over the Bell basis. For $n \geq 3$, the Fourier basis
example shows that the lower bound~\eqref{eq:ldoi_lb} evaluates to exactly
$\frac{1}{2} - \frac{n-2}{2n^2}$. Whether $\opt_{\locc}(\E)$ for the Fourier
basis equals this lower bound value or is strictly larger remains an open
question (see Example~\ref{exam:mixed_bell_basis}).

While equality $\opt_{\locc}(\E) = \opt_{\ppt}(\E)$ holds
for uniform ensembles over a broad class of LDOI bases---including all two-qubit
cases by Corollary~\ref{cor:ppt_dist_2x2} and those satisfying the conditions of
Corollary~\ref{cor:large_U}---it remains an open question whether this equality
holds for \emph{all} orthonormal LDOI bases. Example~\ref{exam:mixed_bell_basis}
exhibits a basis where the upper and lower bounds in
Theorem~\ref{thm:ldoi_basis_one_copy} do not match, suggesting the possibility
of a strict gap in higher dimensions. In any case, we can show that these two
quantities are always close together. In particular, the following result shows
that the maximal gap between $\opt_{\ppt}(\E)$ and
$\opt_{\locc}(\E)$ for uniform ensembles approaches $0$ as $n \rightarrow
\infty$, and the largest potential gap between these quantities in any dimension
is $1/16 = 0.0625$ (when $n = 4$).

\begin{cor}\label{cor:close_LDOI_LOCC_PPT}
  Let $n \geq 2$ be an integer and let $\X = \Y = \complex^n$. For any orthonormal
  LDOI basis of $\X \otimes \Y$, define the ensemble $\E = \{(1/n^2,
  \ketbra{\phi_{i,j}}{\phi_{i,j}}) : 1 \leq i,j \leq n\}$ with uniform prior
  distribution, where $\ket{\phi_{i,j}}$ are the basis vectors. Then
  \begin{equation}
    0 \leq \opt_{\ppt}(\E) - \opt_{\locc}(\E) \leq \frac{n - 2}{2n^2}.
  \end{equation}
\end{cor}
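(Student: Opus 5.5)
The lower bound $0 \leq \opt_{\ppt}(\E) - \opt_{\locc}(\E)$ is immediate from the hierarchy $\locc \subseteq \sep \subseteq \ppt$. For the upper bound, the plan is to subtract the lower bound~\eqref{eq:ldoi_lb} from the explicit upper bound~\eqref{eq:ppt_dist_weaker}, both from Theorem~\ref{thm:ldoi_basis_one_copy}. The common term $\sum_{i\neq j}\max\{|a_{i,j}|^2,|a_{j,i}|^2\}$ cancels, which leaves
\begin{equation}
  \opt_{\ppt}(\E) - \opt_{\locc}(\E) \leq \frac{1}{n^2}\left(\sum_{i=1}^n \max\left\{\frac{1}{2}, m_i\right\} - \max_{\sigma\in S_n}\sum_{i=1}^n |u_{i,\sigma(i)}|^2\right),
\end{equation}
where $m_i = \max_k |u_{k,i}|^2$ is the largest squared entry in column $i$. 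So everything reduces to a combinatorial estimate on the doubly stochastic matrix $D = (|u_{i,j}|^2)$: we need the bracket to be at most $(n-2)/2$.

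To prove that estimate, let $S = \{ i : m_i > 1/2\}$ and $s = |S|$. A row of $D$ sums to $1$, so it can contain at most one entry exceeding $1/2$. Hence the row indices achieving $m_i$ for $i \in S$ are pairwise distinct. We can therefore build a permutation $\sigma$ that matches each column $i\in S$ to its maximizing row, and matches the remaining $n-s$ columns to the remaining rows arbitrarily. Two cases then cover everything.
\begin{itemize}
  \item If $s \geq 1$, this $\sigma$ gives an assignment value of at least $\sum_{i\in S} m_i$. The bracket is then at most $\sum_{i\in S} m_i + (n-s)/2 - \sum_{i\in S} m_i = (n-s)/2 \leq (n-1)/2$.
  \item If $s = 0$, Birkhoff's theorem (or the earlier doubly stochastic observation) gives assignment value at least $1$. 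The bracket is then at most $n/2 - 1 = (n-2)/2$.
\end{itemize}

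The main obstacle is that the first case only yields $(n-1)/2$ when $s=1$, so the crude accounting has to be sharpened. The first refinement to try is a lower bound of $1$ on the assignment restricted to the columns outside $S$. This would follow by applying Birkhoff/König to the submatrix of $D$ on the unused rows and columns, using that the row and column sums there remain large. That would give an assignment value of at least $\sum_{i\in S} m_i + \min(1,\cdot)$, where the second term must be justified. Cleaner still, apply Birkhoff directly to $D$: some permutation $\tau$ has $\sum_i |u_{i,\tau(i)}|^2 \geq 1$. When $s\ge1$, compare instead with the matching $\sigma$ built above, where each column in $S$ contributes more than $1/2$. For $s \ge 2$ we get $(n-s)/2 \le (n-2)/2$ directly. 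For $s=1$, the one column of $S$ contributes $m_{i_0} > 1/2$. If $m_{i_0}\ge 1-\epsilon$ is not large, we instead need the remaining $n-1$ columns to contribute at least $1 - m_{i_0} + $ something. The plan is to show, via a König-type deficiency argument on the $(n-1)\times(n-1)$ submatrix (whose row and column sums are each at least $1 - m_{i_0}$, with total mass $n-2+m_{i_0}$), that some matching there has weight at least $1 - m_{i_0}$ up to the needed $1/2$ slack. Dividing by $n^2$ then gives the claimed bound $(n-2)/(2n^2)$.
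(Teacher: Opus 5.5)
Your reduction is correct: subtracting \eqref{eq:ldoi_lb} from \eqref{eq:ppt_dist_weaker} leaves the bracket $\sum_i \max\{1/2,m_i\}-\max_\sigma\sum_i|u_{i,\sigma(i)}|^2$, and your cases $s=0$ and $s\ge 2$ correctly bound it by $(n-2)/2$. Your strict threshold $m_i>1/2$ is actually more careful than the paper's non-strict one. The paper's step ``permute rows so that $|u_{i,i}|\ge 1/\sqrt2$ for $i\le M$'' can fail when two columns reach squared modulus exactly $1/2$ in the same row. An example is the orthogonal matrix with rows $(1,1,0)/\sqrt2$, $(1/2,-1/2,1/\sqrt2)$, $(1/2,-1/2,-1/\sqrt2)$.

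The gap is the case $s=1$, which you identify but never close. Let $r_0$ be the maximizing row of column $i_0$, and let $D'$ be the $(n-1)\times(n-1)$ block obtained by deleting row $r_0$ and column $i_0$. You need a permutation of total weight at least $m_{i_0}+1/2$, i.e.\ a matching of weight at least $1/2$ in $D'$. Neither of your tentative routes delivers this.
\begin{itemize}
\item A lower bound of $1$ on the residual matching is false in general. Take $D=\frac15\begin{psmallmatrix}3&1&1\\1&2&2\\1&2&2\end{psmallmatrix}$, which is $(|u_{i,j}|^2)$ for a suitable unitary. Here $s=1$, the residual block is $\frac25\1_2$, and its best matching is $4/5$.
\item The final target of a matching of weight about $1-m_{i_0}$ is too weak, because $1-m_{i_0}<1/2$. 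Your row/column-sum bound of $1-m_{i_0}$ for $D'$ only guarantees matchings of that size.
\end{itemize}

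The missing step is a one-line averaging argument, using the total mass $n-2+m_{i_0}$ that you already computed. Averaging the weight over all $(n-1)!$ permutations of $D'$ gives one of weight at least $(n-2+m_{i_0})/(n-1)$. This is at least $1/2$ because $n-3+2m_{i_0}>n-2\ge 0$. Alternatively, each row of $D'$ has sum $1-|u_{r,i_0}|^2\ge m_{i_0}$, so the average is even at least $m_{i_0}$. Adding the entry $(r_0,i_0)$ gives assignment value at least $m_{i_0}+1/2$. The bracket is then at most $m_{i_0}+(n-1)/2-m_{i_0}-1/2=(n-2)/2$.

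This averaging is exactly the device the paper applies to its block $U_4$. The paper treats all $M$ uniformly: it uses the Frobenius estimate $\|U_4\|_{\textup{F}}^2\ge n-3M/2$, then maximizes the resulting convex expression in $M$, attaining $(n-2)/2$ at $M=0$ and $M=n-1$. Once $s=1$ is patched this way, your case split is a slightly more elementary route, since for $s\ge 2$ you need no residual estimate at all.
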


\begin{proof}
  The left inequality is trivial. To verify the right inequality, we subtract the bound~\eqref{eq:ldoi_lb} from~\eqref{eq:ppt_dist_weaker} to see that
  \begin{align}\label{eq:diff_locc_ppt_firstbound}
    \opt_{\ppt}(\E) - \opt_{\locc}(\E) \leq \frac{1}{n^2}\left( \sum_{i=1}^n \max\left\{ \frac{1}{2}, \max_{1 \leq k \leq n} \left\{ |u_{k,i}|^2 \right\} \right\} - \max_{\sigma \in S_n}\left\{ \sum_{i=1}^n |u_{i,\sigma(i)}|^2\right\} \right).
  \end{align}
  Suppose that there are $M$ columns of $U$ containing an entry with magnitude at least $1/\sqrt{2}$ ($M$ may equal $0$). By permuting the columns of $U$ we can assume without loss of generality that these are its first $M$ columns, and by permuting the rows of $U$ we can further assume without loss of generality that $|u_{i,i}| \geq 1/\sqrt{2}$ for $1 \leq i \leq M$. Then
  \begin{align}\label{eq:max_U_val}
    \sum_{i=1}^n \max\left\{ \frac{1}{2}, \max_{1 \leq k \leq n} \left\{ |u_{k,i}|^2 \right\} \right\} = \frac{n-M}{2} + \sum_{i=1}^M |u_{i,i}|^2.
  \end{align}
  Similarly, we claim that
  \begin{align}\label{eq:perm_U_lb}
    \max_{\sigma \in S_n}\left\{ \sum_{i=1}^n |u_{i,\sigma(i)}|^2\right\} & 
    \geq \sum_{i=1}^M |u_{i,i}|^2 + 
    \begin{cases}
      \frac{n - 3M/2}{n-M} & \text{if $M < n$}, \\ 
      0 & \text{if $M = n$}.
    \end{cases}
  \end{align}
  To verify this inequality, we apply some Frobenius norm estimates to $U$: if we partition $U$ as a block matrix
  \begin{equation}
    U = \begin{bmatrix}
      U_1 & U_2 \\ U_3 & U_4
    \end{bmatrix}
  \end{equation}
  with $U_1$ being $M \times M$, then $\|U_1\|_{\textup{F}}^2 \geq \sum_{i=1}^M
  |u_{i,i}|^2 \geq M/2$, since $|u_{i,i}|^2 \geq 1/2$ for all $1 \leq i \leq
  M$. Now use the fact that $U$ is unitary to see that $\|U_1\|_{\textup{F}}^2
  + \|U_2\|_{\textup{F}}^2 = M$, which implies $\|U_2\|_{\textup{F}}^2 \leq M -
  M/2 = M/2$. Finally, using unitarity of $U$ again shows that
  $\|U_2\|_{\textup{F}}^2 + \|U_4\|_{\textup{F}}^2 = n-M$, so
  $\|U_4\|_{\textup{F}}^2 \geq n-M - M/2 = n - 3M/2$. Since $U_4$ is $(n-M)
  \times (n-M)$ with Frobenius norm squared at least $n - 3M/2$, and the
  optimal assignment for $U_4$ selects one entry per row and column, this
  assignment must achieve total squared magnitude at least $(n - 3M/2)/(n-M)$
  on average. Therefore, there must exist some $\sigma \in S_n$ with $\sigma(i)
  = i$ for $1 \leq i \leq M$ such that $\sum_{i=M+1}^n |u_{i,\sigma(i)}|^2 \geq
  (n - 3M/2)/(n-M)$, giving the bound~\eqref{eq:perm_U_lb}.
  
  Substituting Equation~\eqref{eq:max_U_val} and the
  bound~\eqref{eq:perm_U_lb} into
  Inequality~\eqref{eq:diff_locc_ppt_firstbound} then shows
  that
  \begin{equation}
    \opt_{\ppt}(\E) - \opt_{\locc}(\E) \leq \frac{n - M}{2n^2} - \begin{cases}\frac{n - 3M/2}{n^2(n-M)} & \text{if $M < n$}, \\ 0 & \text{if $M = n$.}\end{cases}
  \end{equation}
  The above bound is maximized when $M = 0$ or $M = n-1$, in which cases it
  simplifies to
  \begin{equation}
    \opt_{\ppt}(\E) - \opt_{\locc}(\E) \leq \frac{n - 2}{2n^2},
  \end{equation}
  which completes the proof.
\end{proof}

Table~\ref{tab:gap_bound} shows the gap bound $(n-2)/(2n^2)$ for small dimensions,
illustrating that the maximum gap of $1/16 = 0.0625$ occurs at $n=4$ and decreases
for larger $n$, vanishing as $n \to \infty$.

\begin{table}[!htpb]
\centering
\begin{tabular}{c|c|c}
$n$ & $(n-2)/(2n^2)$ & Decimal \\
\hline
2 & 0 & 0 \\
3 & $1/18$ & $\approx 0.0556$ \\
4 & $1/16$ & $= 0.0625$ \\
5 & $3/50$ & $= 0.06$ \\
10 & $1/25$ & $= 0.04$ \\
20 & $9/400$ & $= 0.0225$ \\
$\infty$ & 0 & 0
\end{tabular}
\caption{Upper bound on the gap $\opt_{\ppt}(\E) - \opt_{\locc}(\E)$ for uniform
LDOI ensembles as a function of dimension $n$. The gap is maximized at $n=4$.}
\label{tab:gap_bound}
\end{table}

\begin{example}\label{exam:mixed_bell_basis}
  Let $n \geq 3$ and consider an orthonormal LDOI basis arising from Definition~\ref{defn:LDOI_basis}
  with $A = (1/\sqrt{2})\1_n$ (the all-ones matrix
  scaled by $1/\sqrt{2}$) and $U$ equal to the $n$-dimensional Fourier matrix.
  Define the uniform ensemble $\E = \{(1/n^2, \ketbra{\phi_{i,j}}{\phi_{i,j}}) :
  1 \leq i,j \leq n\}$, where $\ket{\phi_{i,j}}$ are the basis vectors.
  This basis consists of $n$ maximally entangled states $\ket{\phi_{i,i}}$ and
  $n(n-1)$ Schmidt-rank-$2$ states $\ket{\phi_{i,j}}$ with $i \neq j$ that are
  symmetric or antisymmetric Bell-like states with Schmidt coefficients
  $1/\sqrt{2}$, $1/\sqrt{2}$.

  Since the Fourier matrix has all entries with magnitude $1/\sqrt{n}$, we have
  $\max_{1 \leq k \leq n} |u_{k,i}|^2 = 1/n < 1/2$ for all $i$. The bounds of
  Theorem~\ref{thm:ldoi_basis_one_copy} therefore show that
  \begin{equation}\label{eq:mixed_bell_bounds}
    \opt_{\locc}(\E) \geq \frac{1}{2} - \frac{n - 2}{2n^2} \quad \text{and} \quad \opt_{\ppt}(\E) \leq \frac{1}{2}.
  \end{equation}
  The upper and lower bounds do not match, indicating a potential gap between
  LOCC and PPT distinguishability. We now show that the upper bound is tight:
  $\opt_{\ppt}(\E) = 1/2$.

  For each ordered pair $(i,j)$ with $i \neq j$, define
  \begin{equation}\label{eq:mixed_bell_measurements}
    P_{i,j} = \frac{1}{2}\big(\ketbra{ij}{ij} + \ketbra{ji}{ji}\big)
    + \frac{s_{i,j}}{2n-2}\big(\ketbra{ij}{ji} + \ketbra{ji}{ij}\big)
    + \frac{1}{2n-2}\big(\ketbra{ii}{ii} + \ketbra{jj}{jj}\big),
  \end{equation}
  where $s_{i,j} = 1$ when $i < j$ and $s_{i,j} = -1$ when $i > j$.
  In particular $s_{j,i} = -s_{i,j}$, which will ensure the cross terms cancel
  when the operators are summed. For the diagonal pairs we set $P_{i,i} = 0$
  for all $1 \leq i \leq n$.

  Each $P_{i,j}$ with $i \neq j$ acts only on the $4$-dimensional subspace
  spanned by $\{\ket{ii}, \ket{ij}, \ket{ji}, \ket{jj}\}$. The block on
  $\{\ket{ij},\ket{ji}\}$ has matrix representation
  $\begin{psmallmatrix}1/2 & s_{i,j}/(2n-2) \\ s_{i,j}/(2n-2) & 1/2\end{psmallmatrix}$, whose eigenvalues are $1/2 \pm 1/(2n-2) \geq 0$, so it is positive
  semidefinite. The block on $\{\ket{ii},\ket{jj}\}$ is diagonal with positive
  entries $1/(2n-2)$. Taking the partial transpose swaps
  $\ket{ij}\!\bra{ji}$ with $\ket{jj}\!\bra{ii}$, which simply replaces the
  latter block with $\begin{psmallmatrix}1/(2n-2) & s_{i,j}/(2n-2) \\
  s_{i,j}/(2n-2) & 1/(2n-2)\end{psmallmatrix}$, again positive semidefinite. Hence every
  $P_{i,j}$ is PPT.

  Summing over all ordered pairs gives
  \begin{equation}
    \sum_{i,j=1}^n P_{i,j} = 
    \sum_{\substack{i,j=1 \\ i \neq j}}^n P_{i,j}
    = \sum_{\substack{i,j = 1 \\ i \neq j}}^n \frac{1}{2}(\ketbra{ij}{ij} + \ketbra{ji}{ji})
      + \sum_{\substack{i,j=1 \\ i \neq j}}^n \frac{1}{2n-2}(\ketbra{ii}{ii} + \ketbra{jj}{jj}),
  \end{equation}
  because the off-diagonal terms $\ketbra{ij}{ji}$ cancel between
  $P_{i,j}$ and $P_{j,i}$ (which have opposite signs). Every basis vector
  $\ket{ij}$ with $i \neq j$ appears in exactly two summands, each contributing
  $1/2$, and every $\ket{ii}$ appears in $2(n-1)$ summands, each contributing
  $1/(2n-2)$. Thus $\sum_{i,j} P_{i,j} = \I$.

  A direct calculation shows
  \begin{equation}
    \ip{P_{i,j}}{\phi_{i,j}} = \frac{1}{2} + \frac{1}{2n-2} = \frac{n}{2n-2}
  \end{equation}
  for all $i \neq j$, regardless of whether $i < j$ or $i > j$: the symmetric
  (resp., antisymmetric) combinations pick up the positive (resp., negative)
  cross term with the same magnitude. Because $\ip{P_{i,i}}{\phi_{i,i}} = 0$
  for all $i$, the average success probability is
  \begin{equation}
    \frac{1}{n^2}\sum_{i,j=1}^n \ip{P_{i,j}}{\phi_{i,j}}
    = \frac{1}{n^2} \cdot n(n-1) \cdot \frac{n}{2n-2} = \frac{1}{2},
  \end{equation}
  establishing $\opt_{\ppt}(\E) = 1/2$.

  Each $P_{i,j}$ is supported on a $(2 \otimes 2)$-dimensional subsystem
  (the subspace spanned by $\{\ket{ii}, \ket{ij}, \ket{ji}, \ket{jj}\}
  \cong \complex^2 \otimes \complex^2$). Since PPT is equivalent to
  separability for $2 \otimes 2$ systems, the measurements are also
  separable. Thus $\opt_{\sep}(\E) = 1/2$ as well.

  Whether the lower bound in Equation~\eqref{eq:mixed_bell_bounds} is
  tight---that is, whether $\opt_{\locc}(\E) = 1/2 -
  (n-2)/(2n^2)$---remains an open question. Whether $\opt_{\locc}(\E)$ coincides with one of these bounds or takes an
  intermediate value remains unknown; resolving this would require new techniques
  for analyzing the full LOCC hierarchy beyond the product measurement strategy
  of Theorem~\ref{thm:ldoi_basis_one_copy}. 
\end{example}

Corollaries~\ref{cor:large_U} and~\ref{cor:ppt_dist_2x2}, as well as Example~\ref{exam:mixed_bell_basis}, might lead us to conjecture that, for every LDOI basis with uniform ensemble $\E$, the upper bound~\eqref{eq:ldoi_ub} on $\opt_{\ppt}(\E)$ is actually equality (for some suitably chosen $c_1$, $\ldots$, $c_n$). We now present the simplest example that we have been able to find to demonstrate that this is not the case.

\begin{example}\label{exam:PPT_bound_not_attained}
  Let $n = 3$ and $\X = \Y = \complex^n$. Consider an LDOI basis of $\X \otimes \Y$
  arising from Definition~\ref{defn:LDOI_basis} via the matrices
  \begin{equation}
    A = \frac{1}{5}\begin{bmatrix}
      0 & 3 & 3 \\
      4 & 0 & 3 \\
      4 & 4 & 0
    \end{bmatrix} \quad \text{and} \quad U = \frac{1}{3}\begin{bmatrix}
      2 & -2 & 1 \\
      1 & 2 & 2 \\
      2 & 1 & -2
    \end{bmatrix}.
  \end{equation}
  Define the uniform ensemble $\E = \{(1/9, \ketbra{\phi_{i,j}}{\phi_{i,j}}) : 1 \leq
  i,j \leq 3\}$, where $\ket{\phi_{i,j}}$ are the basis vectors.
  Semidefinite programming shows that $c_1 + c_2 + c_3$ is minimized (subject to the constraints~\eqref{eq:ldoi_c_bound}) when $c_1 = c_2 = c_3 = 12/25$. Note that in this example, the constraint $c_ic_j \geq |a_{i,j}|^2|a_{j,i}|^2$ is satisfied with equality for all $i \neq j$: we have $c_ic_j = (12/25)^2 = 144/625 = (3/5)^2(4/5)^2 = |a_{i,j}|^2|a_{j,i}|^2$ exactly. With these values, the best bound~\eqref{eq:ldoi_ub} is equal to
  \begin{equation}
    \frac{1}{n^2}\left(\sum_{\substack{i,j=1 \\ i \neq j}}^n \max\left\{ |a_{i,j}|^2, |a_{j,i}|^2 \right\} + \sum_{i=1}^n c_i\right) = 44/75 \approx 0.5867.
  \end{equation}
  However, solving the semidefinite program~\eqref{eq:ppt-primal} directly for this ensemble $\E$ shows that
  \begin{equation}
    \opt_{\ppt}(\E) = 26/45 \approx 0.5778,
  \end{equation}
  demonstrating that the upper bound~\eqref{eq:ldoi_ub} on $\opt_{\ppt}(\E)$ is not always attained. The gap arises because the upper bound~\eqref{eq:ldoi_ub} is derived from
  the constrained dual~\eqref{eq:ppt-dual-constrained} with $H$ restricted to be
  diagonal. For this particular $(U,A)$ pair, the columns of $U$ have no entry
  with magnitude at least $1/\sqrt{2}$, so the condition of
  Corollary~\ref{cor:large_U} fails. In the full dual~\eqref{eq:ppt-dual}, the
  auxiliary variables $Q_k$ provide additional degrees of freedom that allow a
  tighter certificate; restricting to the diagonal ansatz $H$ forfeits this
  flexibility. More generally, the bound~\eqref{eq:ldoi_ub} is tight precisely
  when the diagonal ansatz suffices to certify optimality, which holds whenever
  the column-magnitude condition of Corollary~\ref{cor:large_U} is satisfied.
\end{example}

\section{Conclusion} 
\label{sec:conclusion}

We have studied the local distinguishability of quantum states with local
diagonal orthogonal invariance---a broad class that includes Werner states,
isotropic states, X-states, and Dicke states. Our main structural result,
Theorem~\ref{thm:ldoi_sdp_simplifies}, shows that for LDOI ensembles, the
search for optimal measurements can be restricted to the LDOI subspace without
loss of generality. For orthonormal LDOI bases, we established efficiently computable
upper and lower bounds on distinguishability (Theorem~\ref{thm:ldoi_basis_one_copy})
and proved that the LOCC supremum equals the PPT and separable optima for a
broad class of bases, including all two-qubit cases
(Corollary~\ref{cor:ppt_dist_2x2}) and bases with sufficiently large entries in
the unitary matrix $U$ (Corollary~\ref{cor:large_U}). More generally, we showed
that the gap between PPT and LOCC distinguishability is bounded by $(n-2)/(2n^2)$,
which achieves its maximum value of $1/16$ at $n=4$ and vanishes asymptotically.
The LOCC lower bound evaluates to exactly $1/2 - (n-2)/(2n^2)$ for the Fourier
basis with uniform ensemble $\E$. Whether $\opt_{\locc}(\E)$ equals this lower
bound value or is strictly larger remains open.

Several natural questions remain open and suggest promising directions for
future investigation. First, while we have shown that $\opt_{\locc}(\E)
= \opt_{\ppt}(\E)$ for a broad class of uniform LDOI ensembles
(Corollaries~\ref{cor:large_U} and~\ref{cor:ppt_dist_2x2}), it remains open
whether this equality holds for \emph{all} uniform ensembles of orthonormal LDOI bases.
Example~\ref{exam:mixed_bell_basis} demonstrates that the bounds in
Theorem~\ref{thm:ldoi_basis_one_copy} do not always coincide, but this does not
rule out equality between the optimal values. It would also be interesting
to extend the explicit bounds of Theorem~\ref{thm:ldoi_basis_one_copy} to
non-uniform prior distributions and to understand how the relationship between
LOCC and PPT distinguishability depends on the prior.

Second, while orthonormal LDOI bases may not achieve perfect minimum-error
distinguishability under PPT measurements (as demonstrated by
Example~\ref{exam:mixed_bell_basis}), we conjecture the following:

\begin{conjecture}\label{conj:unambiguous_ppt}
  Every orthonormal LDOI basis of $\X \otimes \Y$ can be unambiguously
  distinguished by PPT measurements.
\end{conjecture}

This has been verified numerically for random LDOI bases in dimensions
$n = 2, 3, 4$, and $5$ using the software described in
Section~\ref{sec:software}. In each case, the SDP for unambiguous PPT
discrimination was feasible for every randomly generated LDOI basis tested.
Note that the projective measurement onto the basis states is not PPT (the
partial transpose has negative eigenvalues), so if this conjecture is true, the
proof would require constructing a nontrivial PPT POVM achieving unambiguous
discrimination. One heuristic reason to expect the conjecture to hold is that
the LDOI block structure severely constrains the partial transposes: the PPT
condition decomposes into $n + \binom{n}{2}$ small semidefinite constraints
(on blocks of size at most $n \times n$ and $2 \times 2$), providing considerably
more room for feasibility than the general $n^2 \times n^2$ PPT condition.

\subsection{Software}
\label{sec:software}
Companion software for determining whether a collection of quantum states
constitute an LDOI basis, implementing the LDOT map, along with the semidefinite
programs for determining the PPT distinguishability of a collection of quantum
states in this work are provided within the \texttt{toqito} quantum information
package~\cite{russo2021toqito} which leverages
\texttt{PICOS}~\cite{sagnol2022picos} and the \texttt{CVXOPT}
solver~\cite{andersen2020cvxopt} for semidefinite program optimization.

\subsection*{Acknowledgments}

N.J.\ was supported by NSERC Discovery Grant RGPIN-2022-04098.

\bibliographystyle{alpha}
\bibliography{refs}

\newcommand{\etalchar}[1]{$^{#1}$}
\begin{thebibliography}{ANTSV99}

\bibitem[ADV20]{andersen2020cvxopt}
Martin Andersen, Joachim Dahl, and Lieven Vandenberghe.
\newblock {CVXOPT}: {Convex} {Optimization}.
\newblock {\em Astrophysics Source Code Library}, page ascl:2008.017, 2020.
\newblock ADS Bibcode: 2020ascl.soft08017A.

\bibitem[ANTSV99]{ambainis1999dense}
Andris Ambainis, Ashwin Nayak, Amnon Ta-Shma, and Umesh Vazirani.
\newblock Dense quantum coding and a lower bound for 1-way quantum automata.
\newblock In {\em Proceedings of the thirty-first annual ACM symposium on Theory of computing}, pages 376--383. ACM, 1999.

\bibitem[Ban11]{bandyopadhyay2011more}
Somshubhro Bandyopadhyay.
\newblock More nonlocality with less purity.
\newblock {\em Physical Review Letters}, 106(21):210402, 2011.

\bibitem[BCJ{\etalchar{+}}15]{bandyopadhyay2015limitations}
Somshubhro Bandyopadhyay, Alessandro Cosentino, Nathaniel Johnston, Vincent Russo, John Watrous, and Nengkun Yu.
\newblock Limitations on separable measurements by convex optimization.
\newblock {\em IEEE Transactions on Information Theory}, 61(6):3593--3604, 2015.

\bibitem[BDF{\etalchar{+}}99]{bennett1999quantum}
Charles~H Bennett, David~P DiVincenzo, Christopher~A Fuchs, Tal Mor, Eric Rains, Peter~W Shor, John~A Smolin, and William~K Wootters.
\newblock Quantum nonlocality without entanglement.
\newblock {\em Physical Review A}, 59(2):1070, 1999.

\bibitem[BN13]{bandyopadhyay2013tight}
Somshubhro Bandyopadhyay and Michael Nathanson.
\newblock Tight bounds on the distinguishability of quantum states under separable measurements.
\newblock {\em Physical Review A}, 88(5):052313, 2013.

\bibitem[BW92]{bennett1992communication}
Charles~H. Bennett and Stephen~J. Wiesner.
\newblock Communication via one- and two-particle operators on {E}instein-{P}odolsky-{R}osen states.
\newblock {\em Physical Review Letters}, 69(20):2881--2884, 1992.

\bibitem[CK06]{chruscinski2006class}
Dariusz Chru{\'s}ci{\'n}ski and Andrzej Kossakowski.
\newblock Class of positive partial transposition states.
\newblock {\em Physical Review A—Atomic, Molecular, and Optical Physics}, 74(2):022308, 2006.

\bibitem[CLM{\etalchar{+}}14]{chitambar2014everything}
Eric Chitambar, Debbie Leung, Laura Man{\v{c}}inska, Maris Ozols, and Andreas Winter.
\newblock Everything you always wanted to know about {LOCC} (but were afraid to ask).
\newblock {\em Communications in Mathematical Physics}, 328(1):303--326, 2014.

\bibitem[Cos13]{cosentino2013positive}
Alessandro Cosentino.
\newblock Positive-partial-transpose-indistinguishable states via semidefinite programming.
\newblock {\em Physical Review A}, 87(1):012321, 2013.

\bibitem[CR14]{cosentino2013small}
Alessandro Cosentino and Vincent Russo.
\newblock Small sets of locally indistinguishable orthogonal maximally entangled states.
\newblock {\em Quantum Information and Computation}, 14(13--14):1098--1106, 2014.

\bibitem[DFXY09]{duan2009distinguishability}
Runyao Duan, Yuan Feng, Yu~Xin, and Mingsheng Ying.
\newblock Distinguishability of quantum states by separable operations.
\newblock {\em IEEE Transactions on Information Theory}, 55(3):1320--1330, 2009.

\bibitem[GKR{\etalchar{+}}02]{ghosh2002local}
Sibasish Ghosh, Guruprasad Kar, Anirban Roy, Debasis Sarkar, Aditi Sen, Ujjwal Sen, et~al.
\newblock Local indistinguishability of orthogonal pure states by using a bound on distillable entanglement.
\newblock {\em Physical Review A}, 65(6):062307, 2002.

\bibitem[Hel69]{helstrom1969quantum}
Carl~W. Helstrom.
\newblock Quantum detection and estimation theory.
\newblock {\em Journal of Statistical Physics}, 1(2):231--252, 1969.

\bibitem[HH99]{horodecki1999reduction}
Micha{\l} Horodecki and Pawel Horodecki.
\newblock Reduction criterion of separability and limits for a class of distillation protocols.
\newblock {\em Physical Review A}, 59:4206--4216, 1999.

\bibitem[HK11]{ha2011one}
Kil-Chan Ha and Seung-Hyeok Kye.
\newblock One-parameter family of indecomposable optimal entanglement witnesses arising from generalized {C}hoi maps.
\newblock {\em Physical Review A}, 84:024302, 2011.

\bibitem[Hol73]{holevo1973bounds}
Alexander~Semenovich Holevo.
\newblock Bounds for the quantity of information transmitted by a quantum communication channel.
\newblock {\em Problemy Peredachi Informatsii}, 9(3):3--11, 1973.

\bibitem[HSSH03]{horodecki2003local}
Micha{\l} Horodecki, Aditi Sen, Ujjwal Sen, and Karol Horodecki.
\newblock Local indistinguishability: {M}ore nonlocality with less entanglement.
\newblock {\em Physical review letters}, 90(4):047902, 2003.

\bibitem[JM19]{johnston2019pairwise}
Nathaniel Johnston and Olivia MacLean.
\newblock Pairwise completely positive matrices and conjugate local diagonal unitary invariant quantum states.
\newblock {\em Electronic Journal of Linear Algebra}, 35:156--180, 2019.

\bibitem[Kuh55]{kuhn1955hungarian}
Harold~W. Kuhn.
\newblock The {H}ungarian method for the assignment problem.
\newblock {\em Naval Research Logistics Quarterly}, 2:83--97, 1955.

\bibitem[NC02]{nielsen2002quantum}
Michael~A. Nielsen and Isaac~L. Chuang.
\newblock Quantum {C}omputation and {Q}uantum {I}nformation, 2002.

\bibitem[NS21]{nechita2021graphical}
Ion Nechita and Satvik Singh.
\newblock A graphical calculus for integration over random diagonal unitary matrices.
\newblock {\em Linear Algebra and its Applications}, 613:46--86, 2021.

\bibitem[QAQJ12]{quesada2012quantum}
Nicolás Quesada, Asma Al-Qasimi, and Daniel F.~V. James.
\newblock Quantum properties and dynamics of {X} states.
\newblock {\em Journal of Modern Optics}, 59(15):1322--1329, 2012.

\bibitem[Rus21]{russo2021toqito}
Vincent Russo.
\newblock toqito -- {Theory} of quantum information toolkit: {A} {Python} package for studying quantum information.
\newblock {\em Journal of Open Source Software}, 6(61):3082, 2021.

\bibitem[SN21]{nechita2021diagonal}
Satvik Singh and Ion Nechita.
\newblock Diagonal unitary and orthogonal symmetries in quantum theory.
\newblock {\em Quantum}, 5:519, 2021.

\bibitem[SS22]{sagnol2022picos}
Guillaume Sagnol and Maximilian Stahlberg.
\newblock {PICOS}: {A} {Python} interface to conic optimization solvers.
\newblock {\em Journal of Open Source Software}, 7(70):3915, 2022.

\bibitem[TAQ{\etalchar{+}}18]{tura2018separability}
Jordi Tura, Albert Aloy, Ruben Quesada, Maciej Lewenstein, and Anna Sanpera.
\newblock Separability of diagonal symmetric states: {A} quadratic conic optimization problem.
\newblock {\em Quantum}, 2:45, 2018.

\bibitem[VSPM01]{virmani2001optimal}
Shashank Virmani, Massimiliano~F Sacchi, Martin~B Plenio, and Damian Markham.
\newblock Optimal local discrimination of two multipartite pure states.
\newblock {\em Physics Letters A}, 288(2):62--68, 2001.

\bibitem[Wat18]{watrous2018theory}
John Watrous.
\newblock {\em The {T}heory of {Q}uantum {I}nformation}.
\newblock Cambridge University Press, 2018.

\bibitem[Wer89]{werner1989quantum}
Reinhard~F. Werner.
\newblock Quantum states with {E}instein-{P}odolsky-{R}osen correlations admitting a hidden-variable model.
\newblock {\em Physical Review A}, 40:4277--4281, 1989.

\bibitem[WH02]{walgate2002nonlocality}
Jonathan Walgate and Lucien Hardy.
\newblock Nonlocality, asymmetry, and distinguishing bipartite states.
\newblock {\em Physical Review Letters}, 89(14):147901, 2002.

\bibitem[Wil13]{wilde2013quantum}
Mark~M. Wilde.
\newblock {\em Quantum {I}nformation {T}heory}.
\newblock Cambridge University Press, 2013.

\bibitem[WSHV00]{walgate2000local}
Jonathan Walgate, Anthony~J Short, Lucien Hardy, and Vlatko Vedral.
\newblock Local distinguishability of multipartite orthogonal quantum states.
\newblock {\em Physical Review Letters}, 85(23):4972, 2000.

\bibitem[YDY12]{yu2012four}
Nengkun Yu, Runyao Duan, and Mingsheng Ying.
\newblock Four locally indistinguishable ququad-ququad orthogonal maximally entangled states.
\newblock {\em Physical review letters}, 109(2):020506, 2012.

\bibitem[YDY14]{yu2014distinguishability}
Nengkun Yu, Runyao Duan, and Mingsheng Ying.
\newblock Distinguishability of quantum states by positive operator-valued measures with positive partial transpose.
\newblock {\em IEEE Transactions on Information Theory}, 60(4):2069--2079, 2014.

\bibitem[Yu16]{yu2016separability}
Nengkun Yu.
\newblock Separability of a mixture of {D}icke states.
\newblock {\em Physical Review A}, 94:060101(R), 2016.

\end{thebibliography}

\end{document}